\newtheorem{theorem}{Theorem}
\newtheorem{remark}[theorem]{Remark}
\numberwithin{equation}{section}
\numberwithin{theorem}{section}
\DeclareMathOperator{\tr}{tr}
\begin{document}

\title{A Variational Approach to Lax Representations}

\author{Duncan Sleigh, Frank Nijhoff and Vincent Caudrelier\\ 
School of Mathematis, University of Leeds}

\maketitle

\abstract{It is shown that the Zakharov-Mihailov (ZM) Lagrangian structure for integrable nonlinear equations derived from a general class of Lax pairs possesses a Lagrangian multiform structure in the sense of \cite{Lobb2009}.  We show that, as a consequence of this multiform structure, we can formulate a variational principle for the Lax pair itself, a problem that to our knowledge was never previously considered.  As an example, we present an integrable $N\times N$ matrix system that contains the AKNS hierarchy, and we exhibit the Lagrangian multiform structure of the scalar AKNS hierarchy by presenting the components corresponding to the first three flows of the hierarchy.  }

\section{Introduction}

A unifying principle in physics is that at the classical level the fundamental theories of interactions can be described by Lagrangian field 
theories through the least-action principle applied to the corresponding action functional. There are strong indications that field theories that 
are classically integrable in the sense of being solvable through the inverse scattering method possess a Lagrangian description. In fact, Zakharov and Mikhailov 
showed in \cite{Zakharov1980} that integrable systems derived from a fairly broad class of Lax pairs, i.e. based on zero-curvature conditions, in 1+1 
dimensions, possess a natural Lagrangian description, while the Lax pair allows for the application of the inverse scattering method or other dressing 
method techniques for their exact solution. A key aspect of such integrable systems is the notion of multidimensional consistency (MDC), namely the 
fact that the defining equations can be seen as members of compatible hierarchies of equations in terms of an, in principle, arbitrary number of independent variables, 
which can be simultaneously and consistently imposed on one and the same set of dependent variables (or the same set of components of the independent 
variable in the multi-component case). Alternatively this can be interpreted as the existence of an infinite hierarchy of symmetries for those 
constitutive equations. A point of view is that the true integrable system in question is the collection of all these compatible equations, i.e. we can consider the 
entire multidimensionally consistent system as the integrable system.\\

\noindent From that latter point of view, a key weakness of the conventional Lagrangian description is that it fails to capture the multidimensional consistency. After all, a scalar 
Lagrangian function will only provide one single equation of the motion per component of the system, and not the whole system of compatible equations for those 
components. Thus, there is a mismatch between the integrability and the Lagrangian aspect of the variational approach, which we would like to be able to 
encode all relevant aspects of the system including its multidimensional consistency. This weakness was overcome in the paper \cite{Lobb2009} where it was proposed 
to extend the classical scalar Lagrangian (or volume form with respect to the space of independent variables) to a genuine differential $d$-form in the space of 
independent variables of $N$ dimensions with $d<N$, where $d$ corresponds to the dimensionality of the equations (i.e. $d=2$ for systems of  
1+1-dimensional equations). This led to the introduction of a new notion of a {\em Lagrangian multiform} \cite{Lobb2009,Xenitidis3295}, where the multidimensional consistency manifests itself by the Euler-Lagrange (EL) equations of the Lagrangian $d$-form being independent of the choice of the surface of integration in the action functional. 

\section{Multiforms and Lax Equations}\label{s:MLE}
\subsection{Lagrangian 2-Forms}

To make these ideas more precise, we define a Lagrangian 2-form in a space of $N>2$ independent variables $\xi_1,\ldots,\xi_N$ by an expression 
\begin{equation}\label{eq:2form}
\sf{L}:= \sum_{i<j} \mathscr{L}_{(ij)} {\rm d}\xi_i\wedge {\rm d}\xi_j 
\end{equation} 
where the components $\mathscr{L}_{(ij)}$ are functions of (scalar or multi-component) field variables $\boldsymbol{\varphi}(\xi_1,\dots,\xi_N)$ and their derivatives. 
The action functional, defined by 
\begin{equation}\label{eq:action}
\mathscr{S}[\boldsymbol{\varphi}(\xi_1,\dots,\xi_N);\sigma]=\int_\sigma\,\sf{L}\  , 
\end{equation}
should be considered as a functional not only of the field variables $\boldsymbol{\varphi}$, but also of the two-dimensional surface $\sigma$ in $N$-dimensional space of 
independent variables. Thus, the relevant variational principle requires that the action $\mathscr{S}$ is stationary with respect to infinitesimal 
variations of the fields $\boldsymbol{\varphi}\to\boldsymbol{\varphi}+\delta\boldsymbol{\varphi}$ for every choice of the surface $\sigma$. In the language of the variational bicomplex (see e.g. \cite{Dickey2003}) this leads to the relation 

\begin{equation}\label{deltaL}
\delta \rm d\sf{L}=0
\end{equation}
where

\begin{equation} 
\rm d\textsf{L}=\sum_{i\neq j,k}D_i\mathscr{L}_{(jk)}{\rm d}\xi_i\wedge {\rm d}\xi_j \wedge {\rm d}\xi_k , 
\end{equation}

\begin{equation}
D_i:=\frac{\partial}{\partial \xi_i}+\sum_I\boldsymbol{\varphi}_{I\xi_i}\frac{\partial}{\partial \boldsymbol{\varphi}_I}
\end{equation}
where $I=(j_1,\ldots,j_N)$ and

\begin{equation}
\boldsymbol{\varphi}_I:=\frac{\partial ^{|I|}\boldsymbol{\varphi}}{(\partial \xi_1)^{j_1}\ldots(\partial \xi_N)^{j_N}}
\end{equation}
with $|I|= j_1+\ldots +j_N$ and $I\xi_k=(j_1,\ldots,j_{k+1},\ldots,j_N)$.  Then

\begin{equation}
\delta \rm d\textsf{L}:=\sum_I \frac{\partial \rm d\textsf{L}}{\partial \boldsymbol{\varphi}_I}\delta \boldsymbol{\varphi}_I ,
\end{equation}
The proof of \eqref{deltaL} (originally given in \cite{Suris2016}) can be found in the first part of Appendix \ref{proof1}.  For an autonomous $\sf{L}$, i.e. one that has no direct dependence on the independent variables $\xi_i$, \eqref{deltaL} implies that $\textrm{d}\textsf{L}=c$, a constant.  This must hold, not as an identity for arbitrary $\boldsymbol{\varphi}$, but on solutions of the Euler-Lagrange (EL) equations for those fields.

\begin{remark}
In \cite{Lobb2009} the slightly stronger ``closure'' relation $\textrm{d}\textsf{L}=0$ was proposed as a condition of stationarity of the action with respect to variations of the action as a functional of the surface $\sigma$. To date, all known examples of Lagrangian multiforms beyond the 1-form case obey the latter condition.
\end{remark}

\noindent A direct consequence of the relation \eqref{deltaL} is that there are constraints on the components $\mathscr{L}_{(ij)}$ that constitute the multiform $\sf{L}$. We shall refer to these constraints as the \textbf{multiform EL equations} of the Lagrangian multiform $\sf{L}$.  We introduce the notation
\begin{equation}
\boldsymbol{\varphi}_{i^aj^bk^c}:=\bigg(\frac{\partial}{\partial \xi_i}\bigg)^a\bigg(\frac{\partial}{\partial \xi_j}\bigg)^b\bigg(\frac{\partial}{\partial \xi_k}\bigg)^c\boldsymbol{\varphi} \quad\text{for } a,b,c \in \mathbb{Z}_{\geq 0}
\end {equation}
and define the variational derivative

\begin{equation}
\frac{\delta\mathscr{L}_{(ij)}}{\delta \boldsymbol{\varphi}_{i^aj^bk^c}}=\sum_{\alpha ,\beta \geq 0}(-1)^{\alpha + \beta} D_i^\alpha D_j^\beta\frac{\partial \mathscr{L}_{(ij)}}{\partial \boldsymbol{\varphi}_{i^{a+\alpha}j^{b+\beta}k^c}} \quad \text{for } a,b,c\in \mathbb{Z}_{\geq 0}
\end{equation}
with $\dfrac{\delta\mathscr{L}_{(ij)}}{\delta \boldsymbol{\varphi}_{i^aj^bk^c}}:=0$ in the case where one or more of $a,b,c$ is negative.  We shall use the convention that

\begin{equation}
\bigg(\frac{\delta \mathscr{L}}{\delta \boldsymbol{\varphi}}\bigg)_{ij}=\frac{\delta \mathscr{L}}{\delta \boldsymbol{\varphi}_{ji}}
\end{equation}
when taking variational derivatives with respect to the matrix-valued field $\boldsymbol{\varphi}$.

\begin{theorem}\label{thm:ELeq}

 The multiform EL equations for a Lagrangian 2-form are given by

\begin{equation}\label{eq:ELeqs}
\frac{\delta\mathscr{L}_{(ij)}}{\delta \boldsymbol{\varphi}_{i^lj^mk^{n-1}}}+\frac{\delta\mathscr{L}_{(jk)}}{\delta \boldsymbol{\varphi}_{i^{l-1}j^mk^n}}+\frac{\delta\mathscr{L}_{(ki)}}{\delta \boldsymbol{\varphi}_{i^lj^{m-1}k^n}}=0
\end{equation}
for $l,m,n\geq 0 $. 
\end{theorem}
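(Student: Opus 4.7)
The plan is to decouple $\delta\,\mathrm{d}\sf{L}=0$ into conditions on each ordered triple, read off the coefficient equations in a formal expansion in $\delta\boldsymbol{\varphi}_{i^lj^mk^n}$, and then apply a full $(i,j,k)$ Euler operator that telescopes these coefficient equations into the partial variational derivatives defined in the paper.

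First I would compute $\mathrm{d}\sf{L}$ from \eqref{eq:2form} using the Leibniz rule and antisymmetry of the wedge product. Grouping by ordered triple gives
\[
\mathrm{d}\sf{L} = \sum_{i<j<k}\mathcal{C}_{(ijk)}\,\mathrm{d}\xi_i\wedge \mathrm{d}\xi_j\wedge \mathrm{d}\xi_k, \qquad \mathcal{C}_{(ijk)} := D_i\mathscr{L}_{(jk)} - D_j\mathscr{L}_{(ik)} + D_k\mathscr{L}_{(ij)}.
\]
Linear independence of the basis $3$-forms reduces $\delta\,\mathrm{d}\sf{L}=0$ to $\delta\mathcal{C}_{(ijk)}=0$ for each triple. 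Commuting $\delta$ past $D_a$, using $\delta\mathscr{L}_{(ab)}=\sum_I(\partial\mathscr{L}_{(ab)}/\partial\boldsymbol{\varphi}_I)\,\delta\boldsymbol{\varphi}_I$ together with the Leibniz identity $D_a(f\,\delta\boldsymbol{\varphi}_I)=(D_af)\,\delta\boldsymbol{\varphi}_I+f\,\delta\boldsymbol{\varphi}_{I+e_a}$, the coefficient of the formal variable $\delta\boldsymbol{\varphi}_{i^lj^mk^n}$ in $\delta\mathcal{C}_{(ijk)}$ comes out to six explicit pieces: for each of the three components $\mathscr{L}_{(ij)},\mathscr{L}_{(jk)},\mathscr{L}_{(ik)}$, one term of the form $D_a(\partial\mathscr{L}_{(bc)}/\partial\boldsymbol{\varphi}_{i^lj^mk^n})$ and one of the form $\partial\mathscr{L}_{(bc)}/\partial\boldsymbol{\varphi}_{i^lj^mk^n-e_a}$, where $a$ is the outer-$D$ direction of the relevant summand in $\mathcal{C}_{(ijk)}$.

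The key step is to apply $\sum_{\alpha,\beta,\gamma\ge 0}(-1)^{\alpha+\beta+\gamma}D_i^\alpha D_j^\beta D_k^\gamma$ to the coefficient equation at index $(l+\alpha,m+\beta,n+\gamma)$ and sum. Take the $\mathscr{L}_{(jk)}$ channel: resumming the $(\beta,\gamma)$ indices packages the partial derivatives into $\frac{\delta\mathscr{L}_{(jk)}}{\delta\boldsymbol{\varphi}_{\,\cdot\,}}$, where the integration by parts runs only over $j,k$ -- the natural cyclic extension of the paper's definition to this component. The derivative-on-partial piece then contributes $\sum_{\alpha\ge 0}(-1)^\alpha D_i^{\alpha+1}\frac{\delta\mathscr{L}_{(jk)}}{\delta\boldsymbol{\varphi}_{i^{l+\alpha}j^mk^n}}$, while the shifted-index piece, after the reindexing $\alpha\mapsto\alpha+1$, yields $\frac{\delta\mathscr{L}_{(jk)}}{\delta\boldsymbol{\varphi}_{i^{l-1}j^mk^n}}-\sum_{\alpha\ge 0}(-1)^\alpha D_i^{\alpha+1}\frac{\delta\mathscr{L}_{(jk)}}{\delta\boldsymbol{\varphi}_{i^{l+\alpha}j^mk^n}}$. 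The two infinite series cancel telescopically, leaving only $\frac{\delta\mathscr{L}_{(jk)}}{\delta\boldsymbol{\varphi}_{i^{l-1}j^mk^n}}$. Identical telescoping in the $\mathscr{L}_{(ij)}$ and $\mathscr{L}_{(ik)}$ channels -- with IBP in $(i,j)$ and $(i,k)$ respectively -- followed by the antisymmetry $\mathscr{L}_{(ki)}=-\mathscr{L}_{(ik)}$, assembles the three surviving terms into precisely the combination \eqref{eq:ELeqs}.

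The main obstacle is verifying the telescoping cancellation in each of the three channels and carefully tracking signs from the reindexing $\alpha\mapsto\alpha+1$ together with the boundary contribution at $\alpha=0$ that produces the single surviving variational derivative. Once the pattern of cancellation is recognised, the derivation is essentially formal: the theorem records the fact that the full $(i,j,k)$-Euler operator applied to $\delta\mathcal{C}_{(ijk)}$ collapses, component by component, into a single partial variational derivative with the "third" direction (the outer-$D$ of that summand) encoded as an index shift.
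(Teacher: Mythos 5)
Your proposal is correct, and it starts from the same point as the paper's proof in Appendix \ref{proof1} --- namely the vanishing, on solutions, of the coefficient of each individual $\delta \boldsymbol{\varphi}_{i^lj^mk^n}$ in $\delta \mathrm{d}\textsf{L}$ --- but the mechanism by which you repackage those coefficient identities into variational derivatives is genuinely different. The paper runs a downward induction on the order $l+m+n$: it integrates by parts one order at a time, uses the recursion \eqref{split} to peel a single total derivative off each variational derivative, and shows that the coefficient of $\delta\boldsymbol{\varphi}_{(l,m,n)}$ after these successive integrations is exactly the cyclic sum \eqref{trip}, assuming the analogous relations at all higher orders. You instead contract the raw (un-integrated) coefficient identities against the full Euler operator $\sum_{\alpha,\beta,\gamma}(-1)^{\alpha+\beta+\gamma}D_i^{\alpha}D_j^{\beta}D_k^{\gamma}$ in one shot; in each channel the sum over the two indices matching that component's directions reconstitutes the paper's two-direction variational derivative, and the remaining one-direction sum telescopes, the $\alpha=0$ boundary term of the shifted-index piece surviving as the single term $\delta\mathscr{L}_{(jk)}/\delta\boldsymbol{\varphi}_{i^{l-1}j^mk^n}$ (and cyclically). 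I have checked the telescoping and the signs, including $\mathscr{L}_{(ki)}=-\mathscr{L}_{(ik)}$ and the convention that variational derivatives with a negative index vanish (which handles $l=0$, etc.); the computation is sound, and applying total derivatives to identities that hold on solutions is legitimate since total derivatives preserve the solution set. Your route is shorter and dispenses with the induction and with the paper's bookkeeping about vanishing boundary contributions and about multiplicities when several of $l,m,n$ are nonzero; what the paper's incremental argument buys in exchange is the explicit identification of \eqref{trip} as the coefficient of $\delta\boldsymbol{\varphi}_{(l,m,n)}$ in the integrated-by-parts variation at each stage, which makes the two families of conditions visibly equivalent rather than related by a one-way implication. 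If you want that equivalence from your version, it suffices to remark that your Euler-operator contraction is a triangular, hence invertible, transformation of the family of coefficient identities.
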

\noindent This is equivalent to the equations given in \cite{Suris2016a,Suris2016} that were derived by approximating the surface of integration, $\sigma$, by a stepped surface. We give an alternative proof of this in Appendix \ref{proof1}, that does not require the use of stepped surfaces.

\begin{remark}
The multiform EL equations include the \textbf{standard EL equations} and the \textbf{higher jet EL equations}.  For example, in the case where $l=0,m=0,n=1$, \eqref{eq:ELeqs} gives us 

\begin{equation}
\frac{\delta\mathscr{L}_{(ij)}}{\delta \boldsymbol{\varphi}}=0,
\end{equation}
i.e. the standard EL equations for $\mathscr{L}_{(ij)}$.  In the case where more than one of $l,m$ and $n$ is greater than zero, we get the higher jet EL equations.
\end{remark}

\noindent In the present paper we apply the idea of Lagrangian multiform to the Lagrangian density proposed in \cite{Zakharov1980}, and we show that it can indeed be extended 
naturally to a Lagrangian 2-form structure. This makes the multidimensional consistency of the corresponding Zakharov-Mikhailov (ZM) system manifest at the Lagrangian 
level. Furthermore we show that, in a precise sense, our Lagrangian multiform leads to a variational formulation of the underlying Lax pair itself. In fact, 
the 2-form structure leads naturally to the Lagrangian description for a {\em Lax triplet} (or more generally a {\em Lax multiplet}), and thus we can recover the Lax pair from the Lagrangian multiforms associated 
with the ZM Lagrangians. 

\subsection{The Zakharov-Mikhailov Lagrangian}\label{s:ZML}

Following the method of Zakharov and Mikhailov \cite{Zakharov1980} we start from a $N\times N$ matrix Lax pair $U$ and $V$ and auxiliary problem
\begin{equation}
\label{eq:Lax}
\Psi_\xi=U(\xi,\eta,\lambda)\Psi, \quad \Psi_\eta=V(\xi,\eta,\lambda)\Psi.
\end{equation}
Henceforth, we shall commit an abuse of terminology and refer to the $N\times N$ matrix $\Psi$ as the eigenfunction of the Lax pair. This gives rise to the compatibility condition
\begin{equation}\label{eq:ZS} 
U_\eta-V_\xi+[U,V]=0.
\end{equation}
We assume that $U$ and $V$ are rational functions of $\lambda$ with a finite number of distinct simple poles (the case where $U$ and $V$ have higher order poles is dealt with in \cite{Dickey2003}), so
\begin{equation}\label{eq:UV} 
U=U^0(\xi,\eta)+\sum_{i=1}^{N_1}\frac{U^i(\xi,\eta)}{\lambda-a_i},\quad V=V^0(\xi,\eta)+\sum_{j=1}^{N_2}\frac{V^j(\xi,\eta)}{\lambda-b_j}.
\end{equation}

\noindent giving the compatibility conditions

\begin{equation}\label{comp1}
U_\eta^0-V_\xi^0+[U^0,V^0]=0
\end{equation}
and
\begin{equation}\label{comp2}
U_\eta^i+\bigg[U^i,V^0+\sum_{j=1}^{N_2}\frac{V^j}{a_i-b_j}\bigg]=0,\quad V_\xi^j+\bigg[V^j,U^0+\sum_{i=1}^{N_1}\frac{U^i}{b_j-a_i}\bigg]=0.
\end{equation}

\noindent Equation (\ref{comp1}) implies that $U^0$ and $V^0$ can be written in terms of an invertible matrix $g(\xi,\eta)$ such that

\begin{equation}\label{uzero}
U^0=g_\xi g^{-1}, \quad V_0=g_\eta g^{-1}.
\end{equation}

\noindent The matrices $U^i$ and $V^j$ are expressed as

\begin{equation}\label{bardef}
U^i=\varphi^i\bar U^i(\varphi^i)^{-1}, \quad V^j=\psi^j\bar V^j(\psi^j)^{-1}
\end{equation}

\noindent where $\bar U^i$ and $\bar V^j$ are the Jordan normal forms of $U^i$ and $V^j$ which depend only on $\xi$ and $\eta$ respectively.  In order to show that $\bar U^i$ depends only on $\xi$, we let $Y^i$ be the solution of

\begin{equation}
Y^i_\eta=V|_{\lambda=a_i}Y^i
\end{equation}

\noindent then

\begin{equation}\label{eq:Yeq}
\begin{split}
\partial_\eta((Y^i)^{-1}U^iY^i)=-(Y^i)^{-1}V|_{\lambda=a_i}U^iY^i+(Y^i)^{-1}[V|_{\lambda=a_i},U^i]Y^i+(Y^i)^{-1}U^iV|_{\lambda=a_i}Y^i=0,
\end{split}
\end{equation}

\noindent so $(Y^i)^{-1}U^iY^i$ is constant with respect to $\eta$.  Since similarity transformations preserve eigenvalues, this tells us that the eigenvalues of $U^i$ do not depend on $\eta$.  Therefore the Jordan normal matrix $\bar U^i$ which has the same eigenvalues as $U^i$ does not depend on $\eta$.  Similarly $\bar V^j$ does not depend on $\xi$.\\

\noindent The ZM Lagrangian

\begin{equation}\label{ZMlag}
\begin{split}
\mathscr{L}_{(\xi\eta)}=\tr\bigg\{\sum_{i=1}^{N_1}(\varphi^i)^{-1}(\varphi_\eta^i-g_\eta g^{-1}\varphi^i)\bar U^i-\sum_{j=1}^{N_2}(\psi^j)^{-1}(\psi_\xi^j-g_\xi g^{-1}\psi^j)\bar V^j\\-\sum_{i=1}^{N_1}\sum_{j=1}^{N_2}\frac{\psi^j\bar V^j(\psi^j)^{-1}\varphi^i\bar U^i(\varphi^i)^{-1}}{a_i-b_j}\bigg\}
\end{split}
\end{equation}

\noindent has EL equations equivalent to the compatibility conditions (\ref{comp2}).  We find that
\begin{subequations}
\begin{equation}\label{varphivarderv}
\begin{split}
\frac{\delta \mathscr{L}}{\delta \varphi^i}=&-(\varphi^i)^{-1}\bigg(\varphi^i_\eta-(g_\eta g^{-1}+\sum_{j=1}^{N_2}\frac{\psi^j \bar V^j (\psi^j)^{-1}}{a_i-b_j})\varphi^i\bigg)\bar U^i(\varphi^i)^{-1}\\&+\bar U^i(\varphi^i)^{-1}\bigg(\varphi^i_\eta-(g_\eta g^{-1}+\sum_{j=1}^{N_2}\frac{\psi^j \bar V^j (\psi^j)^{-1}}{a_i-b_j})\varphi^i\bigg)(\varphi^i)^{-1}
\end{split}
\end{equation}
and

\begin{equation}\label{psivarderv}
\begin{split}
\frac{\delta \mathscr{L}}{\delta \psi^j}=&(\psi^j)^{-1}\bigg(\psi^j_\xi-(g_\xi g^{-1}+\sum_{i=1}^{N_1}\frac{\varphi^i \bar U^i (\varphi^i)^{-1}}{b_j-a_i})\psi^j\bigg)\bar V^j(\psi^j)^{-1}\\&-\bar V^j(\psi^j)^{-1}\bigg(\psi^j_\xi-(g_\xi g^{-1}+\sum_{i=1}^{N_1}\frac{\varphi^i \bar U^i (\varphi^i)^{-1}}{b_j-a_i})\psi^j\bigg)(\psi^j)^{-1}
\end{split}
\end{equation}
\end{subequations}
which, when we use \eqref{bardef} and set equal to zero are equivalent to \eqref{comp2}.

\begin{remark}
From the definition of $\varphi^i$ and $\psi^j$ in \eqref{bardef}, it is clear that they are not-unique.  As a result, \eqref{varphivarderv} is equivalent to the statement that

\begin{equation}
(\varphi^i)^{-1}(\varphi^i_\eta-(g_\eta g^{-1}+\sum_{j=1}^{N_2}\frac{\psi^j \bar V^j (\psi^j)^{-1}}{a_i-b_j})\varphi^i)
\end{equation}
can be any matrix that commutes with $\bar U^i$.  A similar statement relating to $\bar V^j$ follows from \eqref{psivarderv}.  However, the non-uniqueness of $\varphi^i$ and $\psi^j$ does not lead to any additional freedom on solutions of the system because, by \eqref{bardef}, this freedom does not affect $U^i$ and $V^j$.
\end{remark}
\noindent We also find that

\begin{equation}
\begin{split}
\frac{\delta \mathscr{L}}{\delta g}=& \sum_{i=1}^{N_1}\bigg\{D_\eta(g^{-1}\varphi^i\bar U^i(\varphi^i)^{-1})+g^{-1}\varphi^i\bar U^i(\varphi^i)^{-1}g^{-1}g_\eta\bigg\}\\&-\sum_{j=1}^{N_2}\bigg\{D_\xi(g^{-1}\psi^j\bar V^j(\psi^j)^{-1})+g^{-1}\psi^j\bar V^j(\psi^j)^{-1}g^{-1}g_\xi\bigg\}.
\end{split}
\end{equation}
When we use \eqref{bardef} and set equal to zero this is equivalent to

\begin{equation}\label{eq:varg}
\sum_{i=1}^{N_1}\bigg\{U^i_\eta+[U^i,V^0]\bigg\}=\sum_{j=1}^{N_2}\bigg\{V^j_\xi+[V^j,U^0]\bigg\}
\end{equation}
which is a consequence of \eqref{comp2}.  Compatibility condition (\ref{comp1}) follows directly from the form of $U^0$ and $V^0$ in terms of $g$, i.e. it is not a variational equation of this Lagrangian.  Zakharov and Mikhailov made no reference in \cite{Zakharov1980} to varying the fields $\bar U^i$ and $\bar V^j$ (although, in \cite{Dickey2003,Dickey1990}, Dickey does vary the analog of these fields).  We note that, in the ZM formulation, this would amount to varying a Jordan normal matrix.

\begin{remark}
By letting $\Psi \rightarrow\Phi=g^{-1}\Psi$, letting $U^i \rightarrow \tilde U^i=g^{-1}U^ig$ and letting $V^j \rightarrow \tilde V^j=g^{-1}V^jg$ we can express the auxiliary problem \eqref{eq:Lax} without $U^0$ and $V^0$ terms. This allows us, without loss of generality, to omit $g$ from all ZM related Lagrangians from here on.
\end{remark}

\noindent We shall now change our perspective from the ZM construction, and consider the ZM Lagrangian 

\begin{equation}
\begin{split}
\mathscr{L}_{(\xi\eta)}=\tr\bigg\{\sum_{i=1}^{N_1}(\varphi^i)^{-1}\varphi_\eta^i\bar U^i-\sum_{j=1}^{N_2}(\psi^j)^{-1}\psi_\xi^j\bar V^j-\sum_{i=1}^{N_1}\sum_{j=1}^{N_2}\frac{\psi^j\bar V^j(\psi^j)^{-1}\varphi^i\bar U^i(\varphi^i)^{-1}}{a_i-b_j}\bigg\}
\end{split}
\end{equation}
as our fundamental object.  We impose that $\bar U^i$ and $\bar V^j$ depend only on $\xi$ and $\eta$ respectively.  We no longer impose that $\bar U^i$ and $\bar V^j$ are in Jordan normal form, and now consider them to be fundamental matrix-valued fields.  We now take variational derivatives with respect to all field variables, including $\bar U^i$ and $\bar V^j$.  The variational derivative with respect to $\bar U^i$ reads

\begin{equation}
\frac{\delta \mathscr{L}_{(\xi\eta)}}{\delta \bar U^i}=(\varphi^i)^{-1}\varphi_\eta^i-\sum_{j=1}^{N_2}\frac{(\varphi^i)^{-1}\psi^j\bar V^j(\psi^j)^{-1}\varphi^i}{a_i-b_j}.
\end{equation}
We set this equal to zero and define
\begin{equation}\label{bardef2}
V^j=\psi^j\bar V^j(\psi^j)^{-1}
\end{equation}

and
\begin{equation}\label{eq:UV} 
V=\sum_{j=1}^{N_2}\frac{V^j(\xi,\eta)}{\lambda-b_j}
\end{equation}
to get that

\begin{equation}
\varphi^i_\eta=V|_{\lambda=a_i}\varphi^i
\end{equation}
Similarly, by varying with respect to $\bar V^j$ and setting

\begin{equation}\label{bardef3}
U^i=\varphi^i\bar U^i(\varphi^i)^{-1}
\end{equation}
and
\begin{equation}
U=\sum_{i=1}^{N_1}\frac{U^i(\xi,\eta)}{\lambda-a_i},
\end{equation}
we get that
\begin{equation}
\psi^j_\xi=U|_{\lambda=b_j}\psi^j.
\end{equation}
These relations imply that
\begin{equation}
\begin{split}
U^i_\eta=D_\eta(\varphi^i\bar U^i(\varphi^i)^{-1})&=V|_{\lambda=a_i}\varphi^i\bar U^i(\varphi^i)^{-1}-\varphi^i\bar U^i(\varphi^i)^{-1}V|_{\lambda=a_i}\\
&=[V|_{\lambda=a_i},U^i]
\end{split}
\end{equation}
and similarly
\begin{equation}
V^j_\xi=[U|_{\lambda=b_j},V^j].
\end{equation}
We get precisely the relations \eqref{comp2}.  We have already seen in \eqref{varphivarderv} and \eqref{psivarderv} that the variational derivatives with respect to $\varphi^i$ and $\psi^j$ also give us \eqref{comp2} and the variational derivative with respect to $g$ gives us \eqref{eq:varg} - a corollary of \eqref{comp2}.  Therefore all of these variations give compatible equations.

\begin{remark}
The variational derivative with respect to $\varphi^i$ gives a weaker relation than the variational derivative with respect to $\bar U^i$.  This is due to the non-uniqueness in the choice of $\varphi^i$ when putting $U^i$ into Jordan normal form.  When we re-write these relations in terms of $U^i$, using \eqref{bardef2} this non-uniqueness is removed and we get the same relations in both cases.  A similar statement can be made regarding $\psi^j$ and $\bar V^j$.
\end{remark}

\subsection{Multidimensional Consistency}

One main goal is to incorporate  the ZM Lagrangian into a Lagrangian multiform, each component of which corresponds to two Lax matrices of a Lax multiplet. We will do so for the first nontrivial case of a Lax triplet $(U,V,W)$. In order for this to be possible at all, a necessary property of the triplet is that it produces a multidimensionally consistent  system. Indeed, we will see that a consequence of our construction is that the multiform EL equations form such a consistent system. Therefore, let us introduce a third Lax matrix $W$ and associated independent variable $\nu$ (giving a third part to the auxiliary problem \eqref{eq:Lax} of the form $\Psi_\nu=W\Psi$).  We require that all of the matrices $U$, $V$ and $W$ are functions of three independent variables $\xi, \eta$ and $\nu$.  In addition to the relation

\begin{equation}\label{eq:ZS2}
U_\eta-V_\xi+[U,V]=0.
\end{equation}
that arises when we sum and combine equations \eqref{comp2}, we assume that we have similar relations

\begin{equation}\label{eq:ccheck}
V_\nu-W_\eta+[V,W]=0\quad\text{and}\quad W_\xi-U_\nu+[W,U]=0
\end{equation}
relating $V$ and $W$, and $W$ and $U$.  In order to proceed, we assume that two of the three relations (e.g. the relations \eqref{eq:ccheck}) hold simultaneously and show that the arising compatibility conditions are consistent with the third relation (i.e. the relation \eqref{eq:ZS2}).  If we view the relations \eqref{eq:ccheck} as definitions for the $\eta$ and $\xi$ derivatives of $W$ then we must check that $D_\eta W_\xi-D_\xi W_\eta=0$ when \eqref{eq:ZS2} holds:

\begin{equation}
\begin{split}
D_\eta W_\xi-D_\xi W_\eta&=D_\eta(U_\nu+[U,W])-D_\xi(V_\nu+[V,W])\\
&=U_{\eta\nu}+[U_\eta,W]+[U,W_\eta]-V_{\xi\nu}-[V_\xi,W]-[V,W_\xi]\\
&=U_{\eta\nu}-V_{\xi\nu}+[U_\eta-V_\xi,W]+[U,W_\eta]-[V,W_\xi].
\end{split}
\end{equation}
We use \eqref{eq:ccheck} again to write this as

\begin{equation}
\begin{split}
&U_{\eta\nu}-V_{\xi\nu}+[U_\eta-V_\xi,W]+[U,V_\nu+[V,W]]-[V,U_\nu-[W,U]]\\
=&D_\nu(U_\eta-V_\xi +[U,V])+[U_\eta-V_\xi,W]+ [U,[V,W]]+[V,[W,U]].
\end{split}
\end{equation}
By the Jacobi identity, this is equivalent to
\begin{equation}
D_\nu(U_\eta-V_\xi +[U,V])+[U_\eta-V_\xi+[U,V],W]
\end{equation}
which is zero whenever \eqref{eq:ZS2} holds.

\subsection{A Lagrangian Multiform Structure}\label{ss:LM}

We now introduce the Lagrangian multiform

\begin{equation}\label{mf}
\textsf{L}=\mathscr{L}_{(\xi\eta)}\rm d\xi\wedge d\eta+\mathscr{L}_{(\eta\nu)}\rm d\eta\wedge d\nu+\mathscr{L}_{(\nu\xi)}\rm d\nu\wedge d\xi
\end{equation}
where
\begin{subequations}
\begin{equation}
\begin{split}
\mathscr{L}_{(\xi\eta)}=\tr\bigg\{\sum_{i=1}^{N_1}(\varphi^i)^{-1}\varphi_\eta^i\bar U^i-\sum_{j=1}^{N_2}(\psi^j)^{-1}\psi_\xi^j\bar V^j-\sum_{i=1}^{N_1}\sum_{j=1}^{N_2}\frac{\psi^j\bar V^j(\psi^j)^{-1}\varphi^i\bar U^i(\varphi^i)^{-1}}{a_i-b_j}\bigg\},
\end{split}
\end{equation}

\begin{equation}
\begin{split}
\mathscr{L}_{(\eta\nu)}=\tr\bigg\{\sum_{j=1}^{N_2}(\psi^j)^{-1}\psi_\nu^j\bar V^j-\sum_{k=1}^{N_3}(\chi^k)^{-1}\chi_\eta^k\bar W^k-\sum_{j=1}^{N_2}\sum_{k=1}^{N_3}\frac{\chi^k\bar W^k(\chi^k)^{-1}\psi^j\bar V^j(\psi^i)^{-1}}{b_j-c_k}\bigg\}
\end{split}
\end{equation}
and
\begin{equation}
\begin{split}
\mathscr{L}_{(\nu\xi)}=\tr\bigg\{\sum_{k=1}^{N_3}(\chi^k)^{-1}\chi_\xi^k\bar W^k-\sum_{i=1}^{N_1}(\varphi^i)^{-1}\varphi_\nu^i\bar U^i-\sum_{k=1}^{N_3}\sum_{i=1}^{N_1}\frac{\varphi^i\bar U^i(\varphi^i)^{-1}\chi^k\bar W^k(\chi^k)^{-1}}{c_k-a_i}\bigg\}.
\end{split}
\end{equation}
\end{subequations}
We impose that the $\bar U^i$ only depend on $\xi$, the $\bar V^j$ only depend on $\eta$ and the $\bar W^k$ only depend on $\nu$.  The multiform EL equations of $\sf{L}$ correspond to the criticality of the action

\begin{equation}
S=\int_\sigma \sf{L}
\end{equation}
simultaneously for every surface $\sigma$ in the $\xi,\eta,\nu$ plane and are given by \eqref{eq:ELeqs}.  Since this Lagrangian 2-form depends only on $1^{st}$ order derivatives of the field variables, the multiform EL equations \eqref{eq:ELeqs} reduce to the following:

\begin{description}
\item[$\bullet$ The standard EL equations] 
\end{description}
\begin{subequations}
\begin{equation}\label{eq:clasEL}
\begin{split}
&\frac{\delta\mathscr{L}_{(\xi\eta)}}{\delta\varphi}=0,\quad \frac{\delta \mathscr{L}_{(\xi\eta)}}{\delta\psi}=0,\quad \frac{\delta \mathscr{L}_{(\xi\eta)}}{\delta\chi}=0,\\
 &\frac{\delta \mathscr{L}_{(\xi\eta)}}{\delta\bar U}=0,\quad \frac{\delta \mathscr{L}_{(\xi\eta)}}{\delta\bar V}=0,\quad \frac{\delta \mathscr{L}_{(\xi\eta)}}{\delta\bar W}=0
\end{split}
\end{equation}
and similarly for $\mathscr{L}_{(\eta\nu)}$ and $\mathscr{L}_{(\nu\xi)}$.
\begin{description}
\item[$\bullet$ The first jet one component EL equations] 
\end{description}

\begin{equation}\label{eq:gEL1}
\frac{\delta \mathscr{L}_{(\xi\eta)}}{\delta \varphi_\nu}=0, \quad \frac{\delta \mathscr{L}_{(\xi\eta)}}{\delta \psi_\nu}=0
\end{equation}
and similar relations for cyclic permutations of $\xi, \eta$ and $\nu$.

\begin{description}
\item[$\bullet$ The first jet two component EL equations] 
\end{description}

\begin{equation}\label{eq:gEL2}
\frac{\delta \mathscr{L}_{(\xi\eta)}}{\delta \varphi_\xi}+\frac{\delta \mathscr{L}_{(\eta\nu)}}{\delta \varphi_\nu}=0 ,\quad \frac{\delta \mathscr{L}_{(\eta\nu)}}{\delta \varphi_\eta}+\frac{\delta \mathscr{L}_{(\nu\xi)}}{\delta \varphi_\xi}=0,\quad\frac{\delta \mathscr{L}_{(\nu\xi)}}{\delta \varphi_\nu}+\frac{\delta \mathscr{L}_{(\xi\eta)}}{\delta \varphi_\eta}=0
\end{equation}
\end{subequations}
and similar relations with respect to $\psi$ and $\chi$.

\begin{remark}
Since, in this case, the Lagrangian multiform \textsf{L} has no $2^{nd}$ or higher jet terms, the variational derivatives with respect to any given first jet term are just partial derivatives with respect to that term.
\end{remark}

\begin{theorem}\label{thm}
For the Lagrangian multiform

\begin{equation}\label{ZMMF}
\sf{L}=\mathscr{L}_{(\xi\eta)}  \rm d\xi\wedge d\eta+\mathscr{L}_{(\eta\nu)}d\eta\wedge d\nu+\mathscr{L}_{(\nu\xi)}d\nu\wedge d\xi
\end{equation}

\noindent The relevant EL equations \eqref{eq:clasEL}, \eqref{eq:gEL1} and \eqref{eq:gEL2} yield the multidimensional system of equations given by \eqref{comp2} and the corresponding relations for the matrix $W$.  Furthermore, $\rm d\sf L=0$ on solutions of the multiform Euler-Lagrange equations.

\end{theorem}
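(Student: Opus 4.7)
The plan is to verify the theorem in three stages corresponding to the three types of multiform EL equation. For the standard Euler--Lagrange equations \eqref{eq:clasEL}, I would extend the computation already carried out in Section~\ref{s:ZML} for $\mathscr{L}_{(\xi\eta)}$: variation with respect to $\varphi^i$, $\psi^j$, $\bar U^i$ and $\bar V^j$ reproduces the zero-curvature pair \eqref{comp2} for $(U,V)$ together with the Lax equations $\varphi^i_\eta=V|_{\lambda=a_i}\varphi^i$, $\psi^j_\xi=U|_{\lambda=b_j}\psi^j$. Applying the identical argument cyclically to $\mathscr{L}_{(\eta\nu)}$ and $\mathscr{L}_{(\nu\xi)}$ produces the analogous compatibility conditions and Lax equations for the pairs $(V,W)$ and $(W,U)$, and joint consistency of the three resulting zero-curvature relations follows from Section~\ref{s:MLE}.

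The first-jet multiform equations \eqref{eq:gEL1} and \eqref{eq:gEL2} hold as identities, without use of the equations of motion. Each $\mathscr{L}_{(ij)}$ depends only on the fields labelled $i$ or $j$; for example $\mathscr{L}_{(\xi\eta)}$ contains no $\chi^k$ and neither $\varphi_\nu$ nor $\psi_\nu$, so \eqref{eq:gEL1} is immediate. For \eqref{eq:gEL2}, take the $\varphi$-relations: two of the three cyclic pairs reduce to $0+0$ because $\mathscr{L}_{(\eta\nu)}$ is independent of $\varphi$, while the remaining pair $\delta\mathscr{L}_{(\nu\xi)}/\delta\varphi^i_\nu+\delta\mathscr{L}_{(\xi\eta)}/\delta\varphi^i_\eta$ produces two kinetic contributions $-\bar U^i(\varphi^i)^{-1}$ and $+\bar U^i(\varphi^i)^{-1}$, which cancel because the term $(\varphi^i)^{-1}\varphi^i_*\bar U^i$ appears with opposite signs in the two neighbouring components of $\textsf{L}$. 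The same mechanism settles the $\psi$- and $\chi$-relations.

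For the closure $\textrm{d}\textsf{L}=0$ on solutions, I would first substitute the Lax equations from the first stage into each component and use cyclicity of the trace to show that on solutions the kinetic and interaction parts collapse to
\[
\mathscr{L}_{(\xi\eta)}\big|_{\mathrm{sol}}=\sum_{i,j}\frac{\tr(V^j U^i)}{a_i-b_j},
\]
with cyclic analogues for the other two components. Then $\textrm{d}\textsf{L}=(D_\nu\mathscr{L}_{(\xi\eta)}+D_\xi\mathscr{L}_{(\eta\nu)}+D_\eta\mathscr{L}_{(\nu\xi)})\,\textrm{d}\xi\wedge\textrm{d}\eta\wedge\textrm{d}\nu$ is evaluated using the adjoint evolutions $D_\nu U^i=[W|_{\lambda=a_i},U^i]$, $D_\nu V^j=[W|_{\lambda=b_j},V^j]$ and their cyclic analogues. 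Inserting the residue expansion $W|_{\lambda=p}=\sum_k W^k/(p-c_k)$, using the partial-fraction identity
\[
\frac{1}{b_j-c_k}-\frac{1}{a_i-c_k}=\frac{a_i-b_j}{(b_j-c_k)(a_i-c_k)}
\]
to cancel the $1/(a_i-b_j)$ from the Lagrangian, and applying the cyclic trace identity $\tr(A[B,C])=\tr(B[C,A])$, each of the three total-derivative terms reduces to a sum over $(i,j,k)$ of $\tr(W^k[V^j,U^i])$ multiplied by a specific rational function of $a_i,b_j,c_k$. The full sum becomes
\[
\sum_{i,j,k}\tr(W^k[V^j,U^i])\left[\frac{1}{(b_j-c_k)(a_i-c_k)}+\frac{1}{(c_k-a_i)(b_j-a_i)}+\frac{1}{(a_i-b_j)(c_k-b_j)}\right],
\]
and the bracketed three-term rational expression vanishes identically by partial fractions, proving the closure relation. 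The main obstacle is the sign and ordering bookkeeping in this last step: repeated application of the cyclic trace identity is required to bring all three contributions onto the common scalar $\tr(W^k[V^j,U^i])$ before the partial-fraction identity can be invoked.
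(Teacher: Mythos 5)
Your proposal is correct and its first two stages coincide with the paper's proof: the standard EL equations are handled by cyclically repeating the Section~\ref{s:ZML} computation, and the first-jet equations \eqref{eq:gEL1}, \eqref{eq:gEL2} hold identically (you actually make the cancellation $\partial\mathscr{L}_{(\nu\xi)}/\partial\varphi^i_\nu+\partial\mathscr{L}_{(\xi\eta)}/\partial\varphi^i_\eta=-\bar U^i(\varphi^i)^{-1}+\bar U^i(\varphi^i)^{-1}=0$ explicit, where the paper only asserts triviality). Where you genuinely diverge is the closure computation. The paper's Appendix~\ref{ap:A} expands $D_\nu\mathscr{L}_{(\xi\eta)}+D_\xi\mathscr{L}_{(\eta\nu)}+D_\eta\mathscr{L}_{(\nu\xi)}$ off-shell, retaining the $U^0$, $V^0$, $W^0$ terms, and then feeds in the compatibility conditions repeatedly until everything cancels. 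You instead restrict first: on the prolonged solution manifold each component collapses to a double sum of traces, e.g. $\mathscr{L}_{(\xi\eta)}\big|_{\mathrm{sol}}=\sum_{i,j}\tr(V^jU^i)/(a_i-b_j)$, and only then do you differentiate using the on-shell evolutions $U^i_\nu=[W|_{\lambda=a_i},U^i]$, etc. This is legitimate because total derivatives commute with restriction to the solution variety together with all its differential consequences (if $A=B$ on solutions then $D_\nu A=D_\nu B$ on solutions), a point you should state explicitly. Both routes terminate in the same three-term partial-fraction identity
\begin{equation*}
\frac{1}{(b_j-c_k)(a_i-c_k)}+\frac{1}{(c_k-a_i)(b_j-a_i)}+\frac{1}{(a_i-b_j)(c_k-b_j)}=0
\end{equation*}
multiplying $\tr(W^k[V^j,U^i])$; what your reorganisation buys is a substantially shorter bookkeeping exercise with no $U^0,V^0,W^0$ terms to track (consistent with the paper's normalisation $g=\mathrm{id}$), at the cost of having to justify the restrict-then-differentiate step, which the paper's off-shell expansion avoids.
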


\begin{proof} 
We begin by confirming that the Multiform EL equations \eqref{eq:clasEL}, \eqref{eq:gEL1} and \eqref{eq:gEL2} hold.  From varying $\bar U$ and $\bar V$ in $\mathscr{L}_{(\xi\eta)}$ we get
\begin{subequations}
\begin{equation}\label{eq:xietabar}
\varphi^i_\eta=V|_{\lambda =a_i}\varphi^i \quad \text{and}\quad \psi^j_\xi=U|_{\lambda=b_j}\psi^j.
\end{equation}
From varying $\bar V$ and $\bar W$ in $\mathscr{L}_{(\eta\nu)}$ we get

\begin{equation}\label{eq:etanubar}
\psi^j_\nu=W|_{\lambda=b_j}\psi^j\quad \text{and}\quad\chi^k_\eta=V|_{\lambda=c_k}\chi^k.
\end{equation}
From varying $\bar W$ and $\bar U$ in $\mathscr{L}_{(\nu\xi)}$ we get

\begin{equation}\label{eq:nuxibar}
\chi^k_\xi=U|_{\lambda=c_k}\chi^k\quad\text{and}\quad\varphi^i_\nu=W|_{\lambda=a_i}\varphi^i.
\end{equation}
\end{subequations}
From varying $\varphi^i$ and $\psi^j$ in $\mathscr{L}_{(\xi\eta)}$ we get
\begin{subequations}
\begin{equation}
U_\eta^i+\bigg[U^i,\sum_{j=1}^{N_2}\frac{V^j}{a_i-b_j}\bigg]=0\quad \text{and}\quad V_\xi^j+\bigg[V^j,\sum_{i=1}^{N_1}\frac{U^i}{b_j-a_i}\bigg]=0
\end{equation}
which are corollaries of \eqref{eq:xietabar}.  From varying $\psi^j$ and $\chi^k$ in $\mathscr{L}_{(\eta\nu)}$ we get

\begin{equation}
V_\nu^j+\bigg[V^j,\sum_{k=1}^{N_3}\frac{W^k}{b_j-c_k}\bigg]=0\quad \text{and}\quad  W_\eta^k+\bigg[W^k,\sum_{j=1}^{N_2}\frac{V^j}{c_k-b_j}\bigg]=0
\end{equation}
which are corollaries of \eqref{eq:etanubar}.  From varying $\chi^k$ and $\varphi^i$ in $\mathscr{L}_{(\nu \xi)}$ we get

\begin{equation}
W_\xi^k+\bigg[W^k,\sum_{i=1}^{N_1}\frac{U^i}{c_k-a_i}\bigg]=0\quad \text{and}\quad  U_\nu^i+\bigg[U^i,\sum_{k=1}^{N_3}\frac{W^k}{a_i-c_k}\bigg]=0
\end{equation}
\end{subequations}
which are corollaries of \eqref{eq:nuxibar}.  Equations of the type given in \eqref{eq:gEL1} are trivially satisfied since there are no $\nu$ derivatives in $\mathscr{L}_{(\xi\eta)}$, no $\xi$ derivatives in $\mathscr{L}_{(\eta\nu)}$ and no $\eta$ derivatives in $\mathscr{L}_{(\eta\nu)}$.  Equations of the type given in \eqref{eq:gEL2} are also trivially satisfied, in that they do not require the field variables to be critical points of the action in order to hold.\\

\noindent The validity of the relation $\rm d \sf L=0$ for the Lagrangian \eqref{ZMMF} on the solutions of the EL equations is verified by direct computation, the details of which are presented in Appendix \ref{ap:A}.  

\end{proof}

\begin{remark}\label{remlax}
\noindent We notice that the $N_3$ pairs of expressions derived from $\sf{L}$ by varying $\bar W^k$,
\begin{subequations}
\begin{equation}
\chi^k_\xi=U|_{\lambda=c_k}\chi^k\quad \text{and}\quad \chi^k_\eta=V|_{\lambda=c_k}\chi^k
\end{equation}
are precisely the auxiliary problem \eqref{eq:Lax} with $\lambda =c_k$.  Similarly, we can view the $N_1$ expressions involving $\varphi^i$ of the form

\begin{equation}
\varphi^i_\eta=V|_{\lambda =a_i}\varphi^i \quad \text{and}\quad\varphi^i_\nu=W|_{\lambda=a_i}\varphi^i
\end{equation}
that come from varying $\bar U^i$ as an auxiliary problem based on $V$ and $W$ with $\lambda=a_i$ and the $N_2$ expressions involving $\psi^j$ of the form

\begin{equation}
\psi^j_\nu=W|_{\lambda=b_j}\psi^j\quad \text{and}\quad\psi^j_\xi=U|_{\lambda=b_j}\psi^j
\end{equation}
\end{subequations}
that come from varying $\bar V^j$ as an auxiliary problem based on $W$ and $U$ with $\lambda=b_j$.
\end{remark}

\subsection{Lagrangian for the ZM Lax Pair}

\noindent Building on remark \ref{remlax}, in the case of the Lax pair \eqref{eq:Lax} involving $U$ and $V$ with spectral parameter $\lambda$ and associated coordinates $\xi$ and $\eta$, we can view the spectral parameter $\lambda$ as coming from a ``ghost'' direction $\nu$ as one of the poles of the associated Lax matrix $W$.  In this case, the Lagrangian multiform \eqref{mf} can be viewed as the Lagrangian for the Lax pair $U$ and $V$, with the multiform EL equations of the Lagrangian multiform including both the equations of motion of the Lax pair $U$ and $V$ and also the auxiliary problem \eqref{eq:Lax}.  However, it would be just as valid to focus on $V$ and $W$ and consider $\xi$ as the ``ghost'' direction, or to focus on $W$ and $U$ with $\eta$ as the ``ghost'' direction, since the three Lax matrices $U$, $V$ and $W$ along with their respective associated coordinates $\xi$, $\eta$ and $\nu$ all hold equal status within the multiform. Therefore, in the context of this Lagrangian multiform description, rather that considering a Lax pair as consisting of matrices $U$ and $V$ with spectral parameter $\lambda$, it is more satisfactory to consider the Lax triplet $U$, $V$ and $W$.\\

\noindent If we are only interested in the $U$, $V$ auxiliary problem

\begin{equation}\label{eq:ap2}
\Psi_\xi=U(\xi,\eta,\lambda)\Psi, \quad \Psi_\eta=V(\xi,\eta,\lambda)\Psi,
\end{equation}
and want to cast this in the multiform structure of Section \ref{ss:LM} then it is necessary to introduce a ``ghost'' variable $\nu$ and require that all field variables now have a $\nu$ dependence.  We must also introduce the additional Lax matrix $W$ relating to the ``ghost'' direction $\nu$.  These are required in the Lagrangian in order to have a closed 2-form.  The multiform EL equations from such a Lagrangian 2-form will have a $\nu$ dependence.  We will go on to show that any set of $\nu$ dependent solutions can be reduced to a set of $\nu$ independent solutions, thereby obtaining precisely the auxiliary problem \eqref{eq:ap2} and the associated compatibility conditions depending only on $\xi$ and $\eta$ from our Lagrangian 2-form.\\

\noindent We take our Lagrangian $\sf{L}$$[\varphi,\psi,\chi, \bar U, \bar V, \bar W;\lambda]$ to be

\begin{equation}\label{eq:lmlax}
\begin{split}
\textsf{L}=&\big(\sum_{\i=1}^{N_1}(\varphi^i)^{-1}\varphi^i_\eta \bar U^i-\sum_{j=1}^{N_2}(\psi^j)^{-1}\psi^j_\xi\bar V^j-\sum_{\i=1}^{N_1}\sum_{j=1}^{N_2}\frac{\psi^j \bar V^j (\psi^j)^{-1}\varphi^i \bar U^i(\varphi^i)^{-1}}{a_i-b_j}\big)d\xi \wedge d\eta\\
&+\big(\sum_{j=1}^{N_2}(\psi^j)^{-1}\psi^j_\nu\bar V^j-\chi^{-1}\chi_\eta\bar W-\sum_{j=1}^{N_2}\frac{\chi\bar W\chi^{-1}\psi^j\bar V^j(\psi^j)^{-1}}{b_j-\lambda}\big)d\eta \wedge d\nu\\
&+\big(\chi^{-1}\chi_\xi\bar W-\sum_{\i=1}^{N_1}(\varphi^i)^{-1}\varphi^i_\nu \bar U^i-\sum_{\i=1}^{N_1}\frac{\varphi^i\bar U^i(\varphi^i)^{-1}\chi\bar W\chi^{-1}}{\lambda-a_i}\big)d\nu\wedge d\xi.
\end{split}
\end{equation}

\noindent This Lagrangian 2-form is special case of the multiform \eqref{mf} where the matrix $W$ has a single pole at $\lambda$.  In accordance with Theorem \ref{thm} the multiform equations of motion given by this multiform are
\begin{subequations}
\begin{equation}\label{eq:chi1}
\chi_\xi=U\chi\quad \text{and}\quad \chi_\eta=V\chi
\end{equation}

\begin{equation}\label{eq:phi1}
\varphi^i_\eta=V|_{\lambda =a_i}\varphi^i \quad \text{and}\quad\varphi^i_\nu=W|_{\lambda=a_i}\varphi^i
\end{equation}

\begin{equation}\label{eq:psi1}
\psi^j_\nu=W|_{\lambda=b_j}\psi^j\quad \text{and}\quad\psi^j_\xi=U|_{\lambda=b_j}\psi^j
\end{equation}
\end{subequations}
and corollaries thereof, including
\begin{subequations}
\begin{equation}
U_\eta^i+\bigg[U^i,\sum_{j=1}^{N_2}\frac{V^j}{a_i-b_j}\bigg]=0\quad \text{and}\quad V_\xi^j+\bigg[V^j,\sum_{i=1}^{N_1}\frac{U^i}{b_j-a_i}\bigg]=0
\end{equation}
\begin{equation}
V_\nu^j+\bigg[V^j,\frac{W^1}{b_j-\lambda}\bigg]=0\quad \text{and}\quad  W_\eta^1+\bigg[W^1,\sum_{j=1}^{N_2}\frac{V^j}{\lambda-b_j}\bigg]=0
\end{equation}
\begin{equation}
W_\xi^1+\bigg[W^1,\sum_{i=1}^{N_1}\frac{U^i}{\lambda-a_i}\bigg]=0\quad \text{and}\quad  U_\nu^i+\bigg[U^i,\frac{W^1}{a_i-\lambda}\bigg]=0.
\end{equation}
\end{subequations}
At this stage, our equations of motion contain $\nu$ which does not feature in the $U$, $V$ Lax pair. However, if the matrices $\bar U$, $\bar V$, $\bar W$, $\varphi^i$, $\psi^j$ and $\chi$ satisfy these equations, then there is also a solution with the same $\bar U$ and $\bar V$ but with $\bar W=0$.  In this case the second equation of \eqref{eq:phi1} and the first equation of \eqref{eq:psi1} tell us that $\varphi^i$ and $\psi^j$ no longer depend on $\nu$, i.e. we can think of these as the $\varphi^i$ and $\psi^j$ of the original solution, with $\nu=\nu_0$, a constant.  The first equation of \eqref{eq:phi1} and the second equation of \eqref{eq:psi1} are simply the definitions of $\varphi^i$ and $\psi^j$ which hold for $\nu=\nu_0$.  Then \eqref{eq:chi1} is precisely the auxiliary problem for $U$ and $V$, which no longer depends upon $\nu$.  Thus, the only remaining relations that are non-zero are

\begin{equation}
\chi_\xi=U\chi\quad \text{and}\quad \chi_\eta=V\chi
\end{equation}
the auxiliary problem based on $U$ and $V$,

\begin{equation}
\varphi^i_\eta=V|_{\lambda =a_i}\varphi^i \quad \text{and}\quad\psi^j_\xi=U|_{\lambda=b_j}\psi^j
\end{equation}
the defining relations for $\varphi^i$ and $\psi^j$
and

\begin{equation}
U_\eta^i+[U^i,\sum_{j=1}^{N_2}\frac{V^j}{a_i-b_j}]=0\quad \text{and}\quad V_\xi^j+[V^j,\sum_{i=1}^{N_1}\frac{U^i}{b_j-a_i}]=0
\end{equation}
the equations of motion for $U^i$ and $V^j$.  All of these relations now only depend upon $\xi$ and $\eta$.  Therefore, the Lagrangian multiform \eqref{eq:lmlax} can be considered the Lagrangian for the Lax pair $U$ and $V$.  We can summarise this result in the following theorem.

\begin{theorem}

The Lagrangian 2-form $\sf{L}$$(\varphi,\psi,\chi, \bar U, \bar V, \bar W,g;\lambda)$ given by \eqref{eq:lmlax} is a Lagrangian for the Lax pair $U$ and $V$.  When we take the multiform EL equations and set $\bar W=0$ our equations of motion are the auxiliary problem

\begin{equation}
\chi_\xi=U\chi\quad \text{and}\quad \chi_\eta=V\chi
\end{equation}
for $U$ and $V$ and the equations of motion

\begin{equation}
U_\eta^i+[U^i,\sum_{j=1}^{N_2}\frac{V^j}{a_i-b_j}]=0\quad \text{and}\quad V_\xi^j+[V^j,\sum_{i=1}^{N_1}\frac{U^i}{b_j-a_i}]=0
\end{equation}
 corresponding to the compatibility conditions of this auxiliary problem.

\end{theorem}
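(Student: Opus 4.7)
The plan is to invoke Theorem \ref{thm} applied to the Lagrangian \eqref{eq:lmlax}, which is precisely the special case of the Lagrangian multiform \eqref{mf} in which $N_3=1$, $c_1=\lambda$, and $\chi^1=\chi$, $\bar W^1=\bar W$. Theorem \ref{thm} then directly supplies the full list of multiform EL equations \eqref{eq:chi1}, \eqref{eq:phi1}, \eqref{eq:psi1} together with their corollaries, as well as closure $\rm d\textsf{L}=0$ on solutions. All that remains is to trace through the consequences of restricting to the reduced family $\bar W=0$.

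First I would observe that $\bar W=0$ forces $W^1=\chi\bar W\chi^{-1}=0$, and hence $W\equiv 0$ as a rational function of $\lambda$. Substituted into the second equation of \eqref{eq:phi1} this gives $\varphi^i_\nu=0$, and into the first equation of \eqref{eq:psi1} it gives $\psi^j_\nu=0$; so on the reduced solution the fields $\varphi^i$ and $\psi^j$ lose their $\nu$-dependence and may be identified with the $\varphi^i$, $\psi^j$ of the purely $(\xi,\eta)$-problem at some fixed $\nu=\nu_0$. Every corollary of the multiform EL equations that involves $W$ in a $\nu$-derivative, for instance $V^j_\nu+[V^j,W^1/(b_j-\lambda)]=0$ together with the cyclic analogues in the third and fourth displayed systems before the theorem, becomes the trivial identity $0=0$.

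Next I would collect the equations that actually survive the reduction. Equations \eqref{eq:chi1} persist unchanged because they arise from varying $\bar W$ in $\mathscr{L}_{(\nu\xi)}$ and $\mathscr{L}_{(\eta\nu)}$, and the resulting relations carry no factor of $\bar W$; these are exactly the auxiliary problem $\chi_\xi=U\chi$, $\chi_\eta=V\chi$ of the Lax pair $(U,V)$. The first equation of \eqref{eq:phi1} and the second of \eqref{eq:psi1} give the defining relations $\varphi^i_\eta=V|_{\lambda=a_i}\varphi^i$ and $\psi^j_\xi=U|_{\lambda=b_j}\psi^j$, and the first displayed corollary in the theorem statement recovers the ZM equations of motion $U^i_\eta+[U^i,\sum_j V^j/(a_i-b_j)]=0$ and $V^j_\xi+[V^j,\sum_i U^i/(b_j-a_i)]=0$, which are the compatibility conditions for the Lax pair. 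These are all genuine $(\xi,\eta)$-equations with no residual reference to $\nu$.

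The only point requiring some care—not a real obstacle—is to verify that $\bar W\equiv 0$ is a consistent reduction of the multiform EL system, in the sense that no surviving equation forces $\bar W$ to be nonzero. This follows by inspection: the equations obtained by varying $\bar W$ determine $\chi_\xi$ and $\chi_\eta$ with no appearance of $\bar W$ itself, while every equation that is linear in $W^1$ vanishes identically once $\bar W=0$, and the equations obtained by varying $\bar U^i$, $\bar V^j$, $\varphi^i$, $\psi^j$ referring only to the $(\xi,\eta)$-block are manifestly untouched by the value of $\bar W$. Hence \eqref{eq:lmlax} admits a consistent reduction whose EL content is precisely the auxiliary problem and compatibility equations of $(U,V)$, establishing the theorem.
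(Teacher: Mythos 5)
Your proposal is correct and follows essentially the same route as the paper: invoke Theorem \ref{thm} for the single-pole specialization of \eqref{mf}, set $\bar W=0$ so that $W\equiv 0$ kills the $\nu$-derivatives of $\varphi^i$ and $\psi^j$, and collect the surviving $(\xi,\eta)$-equations, namely the auxiliary problem \eqref{eq:chi1}, the defining relations, and the ZM compatibility conditions. Your explicit consistency check that $\bar W\equiv 0$ is an admissible reduction is a slightly more careful rendering of the paper's remark that any solution yields another with the same $\bar U,\bar V$ but $\bar W=0$.
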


\section{Matrix AKNS Hierarchy}

As a specific example of the general construction, we present here the case of the single-pole Lax pair which, with appropriate choice of variables, can be viewed as a generating model for the generalized, i.e. $N\times N$ matrix generalization, of the famous AKNS hierarchy of \cite{AKNS1974}.\\ 

\subsection{An Integrable $N \times N$ Hierarchy and its ZM Lagrangian}

\noindent We begin by introducing co-ordinates $x_i$ for $i=1,\ldots,\infty$ and we define the derivatives with respect to $\xi$  and $\eta$ such that

\begin{equation}
\partial_\xi=\sum_{i=0}^{\infty}\frac{1}{a^{i+1}}\frac{\partial}{\partial x_i}\quad\text{and}\quad\partial_\eta=\sum_{j=0}^{\infty}\frac{1}{b^{j+1}}\frac{\partial}{\partial x_j}
\end{equation}

\noindent and apply this to form a Lax pair and auxiliary problem with a single simple pole

\begin{equation}\label{eq:pair1}
\Psi_\xi=\frac{U(\xi,\eta)}{\lambda - a}\Psi, \quad \Psi_\eta=\frac{V(\xi,\eta)}{\lambda - b}\Psi,
\end{equation}
i.e. the ZM auxiliary problem with $N_1=1$ and $N_2=1$.  This gives rise to the compatibility conditions

\begin{equation}\label{em}
U_\eta=V_\xi=\frac{[V,U]}{a-b}.
\end{equation}
By the ZM method outlined in Section \ref{s:MLE}, this has the Lagrangian
\begin{subequations}
\begin{equation}\label{eq:WZWLag}
\mathscr{L}_{(\xi\eta)}=\tr\bigg\{\varphi^{-1}\varphi_\eta\bar U-\psi^{-1}\psi_\xi\bar V-\frac{\psi\bar V\psi^{-1}\varphi\bar U\varphi^{-1}}{a-b}\bigg\}
\end{equation}

\noindent We can now introduce the co-ordinate $\nu$, the associated matrix $\bar W(\nu)$ and parameter $c$ to form two further Lagrangians

\begin{equation}
\mathscr{L}_{(\eta\nu)}=\tr\bigg\{\psi^{-1}\psi_\nu\bar V-\chi^{-1}\chi_\eta\bar W-\frac{\chi\bar W\chi^{-1}\psi\bar V\psi^{-1}}{b-c}\bigg\}
\end{equation}
and
\begin{equation}
\mathscr{L}_{(\nu\xi)}=\tr\bigg\{\chi^{-1}\chi_\xi\bar W-\varphi^{-1}\varphi_\nu\bar U-\frac{\varphi\bar U\varphi^{-1}\chi\bar W\chi^{-1}}{c-a}\bigg\}
\end{equation}
to form the Lagrangian multiform
\begin{equation}
\mathscr{L}_{(\xi\eta)}d\xi\wedge d\eta+\mathscr{L}_{(\eta\nu)}d\eta\wedge d\nu+\mathscr{L}_{(\nu\xi)}d\nu\wedge d\xi.
\end{equation}
\end{subequations}
By Theorem \ref{thm}, this Lagrangian multiform is closed on solutions of this system and has Multiform EL equations that include \eqref{em} when we let $U=\varphi \bar U \varphi^{-1}$ and $V=\psi \bar V\psi^{-1}$,  i.e. we have a Lagrangian multiform structure for this system.\\

\noindent Since, on the equations of motion, $U_\eta=V_\xi$, there exists a matrix $H$ such that $U=H_\xi$ and $V=H_\eta$.  Expressing (\ref{em}) in terms of $H$,  we get

\begin{equation}\label{eq:Hxi}
H_{\xi\eta}=\frac{[H_\eta,H_\xi]}{a-b}.
\end{equation}
A conventional Lagrangian that gives this expression directly is given in \cite{Nijhoff1987}.  When we expand the $\xi$ and $\eta$ derivatives in terms of the $x_i$ co-ordinates this gives us

\begin{equation}\label{eq:Hrel}
H_{x_{i}x_{j-1}}-H_{x_{i-1}x_{j}}=[H_{x_{j-1}},H_{x_{i-1}}],
\end{equation}
an integrable $N\times N$ matrix system \cite{Nijhoff1987,Nijhoff1988}. We will show that, in the $2 \times 2$ case, this contains the AKNS hierarchy; this particular case, and the underlying Kac-Moody algebra structure were treated in \cite{FLASCHKA1983}, where in particular the corresponding symplectic forms were given.\\

\noindent We define the matrix

\begin{equation}
Q_i:=-\partial_{x_{i-1}}H\quad \text{for }i\geq 1
\end{equation}
so \eqref{eq:Hrel} becomes

\begin{equation}\label{eq:Qeq1}
\partial_{x_j}Q_{i}-\partial_{x_i}Q_{j}=[Q_j,Q_i]
\end{equation}
and since partial derivatives of $H$ with respect to the $x_i$ co-ordinates commute, we also have that

\begin{equation}\label{eq:Qeq2}
\partial_{x_{i}}Q_j=\partial_{x_{j-1}}Q_{i+1}.
\end{equation}
If we define $Q_0$ to be a constant matrix then \eqref{eq:Qeq1} and \eqref{eq:Qeq2} give us the additional relation

\begin{equation}\label{eq:Qeq3}
[Q_0,Q_{k+1}]+[Q_{1},Q_k]=\partial_{x_1}Q_k.
\end{equation}
The relations \eqref{eq:Qeq1},\eqref{eq:Qeq2} and \eqref{eq:Qeq3}are used recursively to find $Q_i$ for all $i$.  In the case of the AKNS hierarchy, we take the $Q_i$ to be $2\times2$ matrices and define

\begin{equation}
Q_0=
\begin{pmatrix}
-i & 0\\
0 & i
\end{pmatrix},\quad 
Q_1=
\begin{pmatrix}
0 & q\\
r & 0
\end{pmatrix}
\end{equation}
where $q$ and $r$ are functions of the $x_i$ co-ordinates.  We are now able to follow the procedure outlined in \cite{FLASCHKA1983} and use \eqref{eq:Qeq1},\eqref{eq:Qeq2} and \eqref{eq:Qeq3} recursively to find the 
$Q_i$,  e.g.

\begin{equation}
Q_2=\frac{i}{2}
\begin{pmatrix}
-qr & q_{x_1}\\
-r_{x_1} & qr
\end{pmatrix}, \quad
Q_3=-\frac{1}{4}
\begin{pmatrix}
qr_{x_1}-rq_{x_1}& q_{x_1x_1}-2q^2r\\
r_{x_1x_1}-2qr^2&-qr_{x_1}+rq_{x_1}
\end{pmatrix},\quad \ldots
\end{equation}
The equations of the AKNS hierarchy are given by

\begin{equation}
\partial_{x_N}Q_{1}-\partial_{x_1}Q_{N}=[Q_{N},Q_{1}]
\end{equation}
i.e. equation \eqref{eq:Qeq1} with $i=1$.\\

\subsection{A Scalar AKNS Multiform}

Scalar Lagrangians for the AKNS hierarchy also possess a Lagrangian multiform structure.  The $\mathscr{L}_{(x_1x_2)}$ and $\mathscr{L}_{(x_3x_1)}$ AKNS Lagrangians, see e.g. \cite{AVAN2016415} are as follows:

\begin{equation}
\mathscr{L}_{(x_1x_2)}= \frac{1}{2}(rq_{x_2}-qr_{x_2})+\frac{i}{2}q_{x_1}r_{x_1}+\frac{i}{2}q^2r^2
\end{equation}
giving equations of motion
\begin{subequations}
\begin{equation}
q_{x_2}=\frac{i}{2}q_{x_1x_1}-iq^2r
\end{equation}
and
\begin{equation}
r_{x_2}=-\frac{i}{2}r_{x_1x_1}+ir^2q
\end{equation}
\end{subequations}
corresponding to $\dfrac{\delta \mathscr{L}_{(x_1x_2)}}{\delta r}=0$ and $\dfrac{\delta \mathscr{L}_{(x_1x_2)}}{\delta q}=0$ respectively.  These are identical to the equations given by the off diagonal entries of

\begin{equation}
\partial_{x_2}Q_{1}-\partial_{x_1}Q_{2}=[Q_{2},Q_{1}].
\end{equation}

\begin{equation}
\mathscr{L}_{(x_3x_1)}= \frac{1}{2}(qr_{x_3}-rq_{x_3})+\frac{1}{8}(r_{x_1}q_{x_1x_1}-q_{x_1}r_{x_1x_1})+\frac{3}{8}qr(rq_{x_1}-qr_{x_1}).
\end{equation}
giving equations of motion
\begin{subequations}
\begin{equation}
q_{x_3}=\frac{3}{2}qrq_{x_1}-\frac{1}{4}q_{x_1x_1x_1}
\end{equation}
and
\begin{equation}
r_{x_3}=\frac{3}{2}rqr_{x_1}-\frac{1}{4}r_{x_1x_1x_1}
\end{equation}
\end{subequations}
corresponding to $\dfrac{\delta \mathscr{L}_{(x_3x_1)}}{\delta r}=0$ and $\dfrac{\delta \mathscr{L}_{(x_3x_1)}}{\delta q}=0$ respectively.  These are identical to the equations given by the off diagonal entries of

\begin{equation}
\partial_{x_3}Q_{1}-\partial_{x_1}Q_{3}=[Q_{3},Q_{1}].
\end{equation}
From the requirement that $\delta \sf{dL}=0$ for the Lagrangian 2-form 
\begin{equation}
\textsf{L}=\mathscr{L}_{(x_1x_2)}\textrm{ d} x_1\wedge \textrm{ d}x_2+\mathscr{L}_{(x_2x_3)}\textrm{ d}x_2\wedge \textrm{ d}x_3+\mathscr{L}_{(x_3x_1)}\textrm{ d}x_3\wedge \textrm{ d}x_1,
\end{equation}
we are able to derive the $\mathscr{L}_{(x_2x_3)}$ Lagrangian as

\begin{equation}
\begin{split}
\mathscr{L}_{(x_2x_3)}=&\frac{1}{4}(q_{x_2}r_{x_1x_1}-r_{x_2}q_{x_1x_1})-\frac{i}{2}(q_{x_3}r_{x_1}+r_{x_3}q_{x_1})+\frac{1}{8}(q_{x_1}r_{x_1x_2}-r_{x_1}q_{x_1x_2})+\frac{3}{8}qr(qr_{x_2}-rq_{x_2})\\
&-\frac{i}{8}q_{x_1x_1}r_{x_1x_1}+\frac{i}{4}qr(qr_{x_1x_1}+rq_{x_1x_1})-\frac{i}{8}(q^2r_{x_1}^2+r^2q_{x_1}^2)+\frac{i}{4}qrq_{x_1}r_{x_1}-\frac{i}{2}q^3r^3.
\end{split}
\end{equation}
The equations of motion for this $\mathscr{L}_{(x_2x_3)}$ Lagrangian are 
\begin{subequations}
\begin{equation}
\frac{i}{2}r_{x_1x_3}+\frac{3}{4}qrr_{x_2}-\frac{1}{4}r_{x_1x_1x_2}+\frac{3}{4}qrr_{x_2}+\frac{i}{2}qrr_{x_1x_1}+\frac{i}{4}r^2q_{x_1x_1}-\frac{i}{4}qr_{x_1}^2+\frac{i}{4}rq_{x_1}r_{x_1}-\frac{3i}{2}q^2r^3=0
\end{equation}
and
\begin{equation}
-\frac{i}{2}q_{x_1x_3}+\frac{3}{4}rqq_{x_2}-\frac{1}{4}q_{x_1x_1x_2}+\frac{3}{4}rqq_{x_2}-\frac{i}{2}rqq_{x_1x_1}-\frac{i}{4}q^2r_{x_1x_1}+\frac{i}{4}rq_{x_1}^2-\frac{i}{4}qr_{x_1}r_{x_1}+\frac{3i}{2}r^2q^3=0
\end{equation}
\end{subequations}
corresponding to $\dfrac{\delta \mathscr{L}_{(x_2x_3)}}{\delta q}=0$ and $\dfrac{\delta \mathscr{L}_{(x_2x_3)}}{\delta r}=0$ respectively.  These do not correspond directly to the equations given by

\begin{equation}
\partial_{x_3}Q_{2}-\partial_{x_2}Q_{3}=[Q_{3},Q_{2}].
\end{equation}
but are equivalent to them modulo the equations of motion of $\mathscr{L}_{(x_1x_2)}$ and $\mathscr{L}_{(x_3x_1)}$.  By construction, this Lagrangian multiform satisfies $\delta \rm d\sf{L}=0$ and $\rm d \sf L=0$ for $q$ and $r$ satisfying the equations of the AKNS hierarchy. Consequently, it obeys the multiform EL equations \eqref{eq:ELeqs}.  This is the first time that scalar Lagrangians of the AKNS hierarchy have been shown to fit into a Lagrangian multiform description.  This result is analogous to the Lagrangian multiform relating to the KdV and sine-Gordon equations, presented in \cite{Suris2016}.\\

\section{Conclusion}

\noindent Using  the  method  outlined  in  this  paper, one is able to construct a Lagrangian multiform  structure for systems with Lax  pairs  in  the  appropriate  form, and in so doing, find a Lagrangian for the Lax pair itself. Lagrangians in the case of Lax pairs with higher-order poles were given by Dickey in \cite{Dickey2003}, and it is to be expected that those can be extended to a Lagrangian multiform structure. The {\em generating PDEs} introduced in \cite{NIJHOFF2000147,Tongas2005} which are associated with non-isospectral Lax pairs, possess Lagrangians of the required form, cf. also \cite{Lobb2009}. Furthermore, we expect that the universal symplectic form of Krichever and Phong, \cite{Krichever1997,Krichever1998} associated with Lax operators could play a role in the construction of Lagrangians possessing a multiform structure. 

\appendix

\section{Proof of Theorem \ref{thm:ELeq}}\label{proof1}

We begin by introducing the notation

\begin{equation}
\boldsymbol{\varphi}_{(a,b,c)}:=\bigg(\frac{\partial}{\partial t_1}\bigg)^a\bigg(\frac{\partial}{\partial t_2}\bigg)^b\bigg(\frac{\partial}{\partial t_3}\bigg)^c\boldsymbol{\varphi} \quad\text{for } a,b,c \in \mathbb{Z}^+
\end {equation}

\noindent and consider the Lagrangian 2-form

\begin{equation}
\sf{L}=\mathscr{L}_{(12)}dt_1\wedge dt_2+\mathscr{L}_{(23)}dt_2\wedge dt_3+\mathscr{L}_{(31)}dt_3\wedge dt_1
\end{equation}

\noindent Let $B$ be an arbitrary three dimensional ball with surface $\partial B$.  We consider the action functional $S$ over the closed surface $\partial B$ such that

\begin{equation}
S[\boldsymbol{\varphi}]=\oint_{\partial B}\sf{L}
\end{equation}
We then apply Stokes' theorem to write $S$ in terms of an integral over B:

\begin{equation}
S[\boldsymbol{\varphi}]=\int_ B\sf{dL}
\end{equation}
and we look for solutions of

\begin{equation}\label{deltaS1}
\delta S=\int_{B}\delta\sf{d}\sf{L}=0
\end{equation}
Since this must hold for arbitrary variations (i.e. with no boundary constraints) for every arbitrary ball $B$, it follows that on solutions $\boldsymbol{\varphi}$ of our system, $\delta \sf{dL}=0$.  Up to this point, we have used the same argument as the one given in the proof of Proposition 2.2 in \cite{Suris2016}.  We now define $\nabla^n$ to be the vector of all $n^{th}$ order derivatives with respect to $t_1, t_2$ and $t_3$ (e.g. $\nabla^2= (\partial_{t_1t_1},\partial_{t_1t_2},\partial_{t_1t_3},\partial_{t_2t_2},\partial_{t_2t_3},\partial_{t_3t_3})^t$).  If our Lagrangian 2-form has terms up to $N^{th}$ order derivatives of our field variable $\boldsymbol{\varphi}$ then we can expand $\delta \sf{dL}$:

\begin{equation}
\delta \sf{dL}=\sum_{i=0}^{N+1}\partial_{\nabla^i \boldsymbol{\varphi}}\sf{dL}\cdot\delta \nabla^i \boldsymbol{\varphi}=0.
\end{equation}
Our sum is up to $N+1$ since $\sf{dL}$ will contain $N+1^{th}$ order derivatives of $\boldsymbol{\varphi}$.  Since $\delta \sf{dL}=0$, each coefficient of $\delta \boldsymbol{\varphi}_{(a,b,c)}$ is zero on solutions of our system.  As a result, it is also true that

\begin{equation}
\begin{split}
\delta \sf{dL}&=\partial_{\nabla^{N+1} \boldsymbol{\varphi}}\sf{dL}\cdot\delta \nabla^{N+1} \boldsymbol{\varphi}\\
&=\partial_{\nabla^N \boldsymbol{\varphi}}\mathcal{L}_{(23)}\cdot\delta \nabla^N \boldsymbol{\varphi}_{t_1}+\partial_{\nabla^N \boldsymbol{\varphi}}\mathcal{L}_{(31)}\cdot\delta \nabla^N \boldsymbol{\varphi}_{t_2}+\partial_{\nabla^N \boldsymbol{\varphi}}\mathcal{L}_{(12)}\cdot\delta \nabla^N \boldsymbol{\varphi}_{t_3}
\end{split}
\end{equation}
in terms of the constituent $\mathscr{L}_{(i,j)}$.  As a result,

\begin{equation}\label{ndelta}
\delta S=\int_{B}(\partial_{\nabla^N \boldsymbol{\varphi}}\mathcal{L}_{(23)}\cdot\delta \nabla^N \boldsymbol{\varphi}_{t_1}+\partial_{\nabla^N \boldsymbol{\varphi}}\mathcal{L}_{(31)}\cdot\delta \nabla^N \boldsymbol{\varphi}_{t_2}+\partial_{\nabla^N \boldsymbol{\varphi}}\mathcal{L}_{(12)}\cdot\delta \nabla^N \boldsymbol{\varphi}_{t_3})dt_1\wedge dt_2\wedge dt_3
\end{equation}
where the coefficient of each $N+1^{th}$ order derivative of $\delta \boldsymbol{\varphi}$ is zero.  I.e. we get $\dfrac{(N+2)(N+3)}{2}$ relations on the constituent $\mathscr{L}_{(ij)}$, one for each of the  $N+1^{th}$ order derivatives of $\delta \boldsymbol{\varphi}$.  We can then get further relations on the constituent $\mathscr{L}_{(ij)}$ by performing integration by parts on \eqref{ndelta} in order to find the coefficients of lower order derivatives of $\delta \boldsymbol{\varphi}$, which must also be identically zero.  We will use an inductive argument to show the general form of these relations.  In our current notation,  the variational derivative takes the form

\begin{equation}
\frac{\delta\mathscr{L}_{(12)}}{\delta \boldsymbol{\varphi}_{(a,b,c)}}=\sum_{\alpha ,\beta \geq 0}(-1)^{\alpha + \beta} D_1^\alpha D_2^\beta\frac{\partial \mathscr{L}_{(12)}}{\partial \boldsymbol{\varphi}_{(a+\alpha,b+\beta,c)}} \quad \text{for } a,b,c\in \mathbb{Z}^+
\end{equation}
and similarly for $\dfrac{\delta\mathscr{L}_{(23)}}{\delta \boldsymbol{\varphi}_{(a,b,c)}}$ and $\dfrac{\delta\mathscr{L}_{(31)}}{\delta \boldsymbol{\varphi}_{(a,b,c)}}$.  In the case where one or more of $a,b,c$ is negative, $\dfrac{\delta\mathscr{L}_{(ij)}}{\delta \boldsymbol{\varphi}_{(a,b,c)}}:=0$.  We will refer to the sum $a+b+c$ as the \textbf{order} of the variational derivative.\\

\noindent The scheme of this proof from here is to first use the form of \eqref{ndelta} to show that 
\begin{equation}\label{trip}
\frac{\delta\mathscr{L}_{(12)}}{\delta \boldsymbol{\varphi}_{(l,m,n-1)}}+\frac{\delta\mathscr{L}_{(23)}}{\delta \boldsymbol{\varphi}_{(l-1,m,n)}}+\frac{\delta\mathscr{L}_{(31)}}{\delta \boldsymbol{\varphi}_{(l,m-1,n)}}=0
\end{equation}
holds for $l+m+n\geq N$, and then use an inductive argument to show that if this holds for $l+m+n\geq M$ then it also holds for $l+m+n=M-1$.\\

\noindent We begin by noticing that by setting the coefficients of each $N+1^{th}$ jet derivative of $\delta \boldsymbol{\varphi}$ in \eqref{ndelta} equal to zero, we get the relations \eqref{trip} for $l+m+n=N$.  We also notice that for $l+m+n>N$ the relations \eqref{trip} hold since all terms are zero.  We now assume that for $l+m+n>M$ the coefficient of $\delta \boldsymbol{\varphi}_{(l,m,n)}$ in \eqref{ndelta} is

\begin{equation}\label{eq:ELeqa}
\frac{\delta\mathscr{L}_{(12)}}{\delta \boldsymbol{\varphi}_{(l,m,n-1)}}+\frac{\delta\mathscr{L}_{(23)}}{\delta \boldsymbol{\varphi}_{(l-1,m,n)}}+\frac{\delta\mathscr{L}_{(31)}}{\delta \boldsymbol{\varphi}_{(l,m-1,n)}}
\end{equation}
and that the coefficient of $\delta \boldsymbol{\varphi}_{(l,m,n)}$ for $l+m+n\leq M$ is zero.  We also assume that \eqref{trip} holds for order greater than or equal to $M$.  We now proceed by finding the coefficient of $\delta \boldsymbol{\varphi}_{(l,m,n)}$ in the case where $l+m+n=M$ and deriving the relations corresponding to setting this coefficient equal to zero.  First, we note that 

\begin{equation}\label{split}
\frac{\delta\mathscr{L}_{(12)}}{\delta \boldsymbol{\varphi}_{(a,b,c)}}=\frac{\partial \mathscr{L}_{(12)}}{\partial \boldsymbol{\varphi}_{(a,b,c)}}-D_1\frac{\delta\mathscr{L}_{(12)}}{\delta \boldsymbol{\varphi}_{(a+1,b,c)}}-D_2\frac{\delta\mathscr{L}_{(12)}}{\delta \boldsymbol{\varphi}_{(a,b+1,c)}}-D_1D_2\frac{\delta\mathscr{L}_{(12)}}{\delta \boldsymbol{\varphi}_{(a+1,b+1,c)}}
\end{equation}
Along with similar relations for $\mathscr{L}_{(23)}$ and $\mathscr{L}_{(31)}$.  By our inductive hypothesis we have that

\begin{equation}
\begin{split}
&\text{the coefficient of }\delta \boldsymbol{\varphi}_{(l+1,m,n)}\text{ is }\frac{\delta\mathscr{L}_{(12)}}{\delta \boldsymbol{\varphi}_{(l+1,m,n-1)}}+\frac{\delta\mathscr{L}_{(23)}}{\delta \boldsymbol{\varphi}_{(l,m,n)}}+\frac{\delta\mathscr{L}_{(31)}}{\delta \boldsymbol{\varphi}_{(l+1,m-1,n)}}\\
&\text{the coefficient of }\delta \boldsymbol{\varphi}_{(l,m+1,n)}\text{ is }\frac{\delta\mathscr{L}_{(12)}}{\delta \boldsymbol{\varphi}_{(l,m+1,n-1)}}+\frac{\delta\mathscr{L}_{(23)}}{\delta \boldsymbol{\varphi}_{(l-1,m+1,n)}}+\frac{\delta\mathscr{L}_{(31)}}{\delta \boldsymbol{\varphi}_{(l,m,n)}}\\
&\text{the coefficient of }\delta \boldsymbol{\varphi}_{(l,m,n+1)}\text{ is }\frac{\delta\mathscr{L}_{(12)}}{\delta \boldsymbol{\varphi}_{(l,m,n)}}+\frac{\delta\mathscr{L}_{(23)}}{\delta \boldsymbol{\varphi}_{(l-1,m,n+1)}}+\frac{\delta\mathscr{L}_{(31)}}{\delta \boldsymbol{\varphi}_{(l,m-1,n+1)}}.
\end{split}
\end{equation}
We will now use integration by parts to find the coefficient of $\delta \boldsymbol{\varphi}_{(l,m,n)}$.  We notice that, since all of the coefficients above are identically zero, there will be no contribution from boundary terms.  Therefore, the coefficient of $\delta \boldsymbol{\varphi}_{(l,m,n)}$ in \eqref{ndelta} is

\begin{equation}
\begin{split}
&-D_1\bigg(\frac{\delta\mathscr{L}_{(12)}}{\delta \boldsymbol{\varphi}_{(l+1,m,n-1)}}+\frac{\delta\mathscr{L}_{(23)}}{\delta \boldsymbol{\varphi}_{(l,m,n)}}+\frac{\delta\mathscr{L}_{(31)}}{\delta \boldsymbol{\varphi}_{(l+1,m-1,n)}}\bigg)\\
&-D_2\bigg(\frac{\delta\mathscr{L}_{(12)}}{\delta \boldsymbol{\varphi}_{(l,m+1,n-1)}}+\frac{\delta\mathscr{L}_{(23)}}{\delta \boldsymbol{\varphi}_{(l-1,m+1,n)}}+\frac{\delta\mathscr{L}_{(31)}}{\delta \boldsymbol{\varphi}_{(l,m,n)}}\bigg)\\
&-D_3\bigg(\frac{\delta\mathscr{L}_{(12)}}{\delta \boldsymbol{\varphi}_{(l,m,n)}}+\frac{\delta\mathscr{L}_{(23)}}{\delta \boldsymbol{\varphi}_{(l-1,m,n+1)}}+\frac{\delta\mathscr{L}_{(31)}}{\delta \boldsymbol{\varphi}_{(l,m-1,n+1)}}\bigg)
\end{split}
\end{equation}
which,by \eqref{split} is equal to

\begin{equation}\label{expansion}
\begin{split}
&-D_1\frac{\delta\mathscr{L}_{(12)}}{\delta \boldsymbol{\varphi}_{(l+1,m,n-1)}}-\frac{\partial}{\partial \boldsymbol{\varphi}_{(l,m,n)}}D_1\mathscr{L}_{(23)}+\frac{\partial \mathscr{L}_{(23)}}{\partial \boldsymbol{\varphi}_{(l-1,m,n)}} +D_1D_2\frac{\delta\mathscr{L}_{(23)}}{\delta \boldsymbol{\varphi}_{(l,m+1,n)}}+D_1D_3\frac{\delta\mathscr{L}_{(23)}}{\delta \boldsymbol{\varphi}_{(l,m,n+1)}}\\&+D_1D_2D_3\frac{\delta\mathscr{L}_{(23)}}{\delta \boldsymbol{\varphi}_{(l,m+1,n+1)}}-D_1\frac{\delta\mathscr{L}_{(31)}}{\delta \boldsymbol{\varphi}_{(l+1,m-1,n)}}\\
&-D_2\frac{\delta\mathscr{L}_{(12)}}{\delta \boldsymbol{\varphi}_{(l,m+1,n-1)}}-D_2\frac{\delta\mathscr{L}_{(23)}}{\delta \boldsymbol{\varphi}_{(l-1,m+1,n)}}-\frac{\partial}{\partial \boldsymbol{\varphi}_{(l,m,n)}}D_2\mathscr{L}_{(31)}+\frac{\partial \mathscr{L}_{(31)}}{\partial \boldsymbol{\varphi}_{(l,m-1,n)}} +D_1D_2\frac{\delta\mathscr{L}_{(31)}}{\delta \boldsymbol{\varphi}_{(l+1,m,n)}}\\&+D_2D_3\frac{\delta\mathscr{L}_{(31)}}{\delta \boldsymbol{\varphi}_{(l,m,n+1)}}+D_1D_2D_3\frac{\delta\mathscr{L}_{(23)}}{\delta \boldsymbol{\varphi}_{(l+1,m,n+1)}}\\
&-\frac{\partial}{\partial \boldsymbol{\varphi}_{(l,m,n)}}D_3\mathscr{L}_{(12)}+\frac{\partial \mathscr{L}_{(12)}}{\partial \boldsymbol{\varphi}_{(l,m,n-1)}} +D_1D_3\frac{\delta\mathscr{L}_{(12)}}{\delta \boldsymbol{\varphi}_{(l+1,m,n)}}+D_2D_3\frac{\delta\mathscr{L}_{(12)}}{\delta \boldsymbol{\varphi}_{(l,m+1,n)}}+D_1D_2D_3\frac{\delta\mathscr{L}_{(12)}}{\delta \boldsymbol{\varphi}_{(l+1,m+1,n)}}\\
&-D_3\frac{\delta\mathscr{L}_{(23)}}{\delta \boldsymbol{\varphi}_{(l-1,m,n+1)}}-D_3\frac{\delta\mathscr{L}_{(31)}}{\delta \boldsymbol{\varphi}_{(l,m-1,n+1)}}.
\end{split}
\end{equation}
Here we have used that
\begin{equation}\label{comm}
\begin{split}
D_1\partial_{\boldsymbol{\varphi}_{(0,a,b)}}\mathscr{L}&=\partial_{\boldsymbol{\varphi}_{(0,a,b)}}D_1\mathscr{L}\\ 
D_1\partial_{\boldsymbol{\varphi}_{(a,b,c)}}\mathscr{L}&=\partial_{\boldsymbol{\varphi}_{(a,b,c)}}D_1\mathscr{L}-\partial_{\boldsymbol{\varphi}_{(a-1,b,c)}}\mathscr{L} \quad \text{for }a\geq 1
\end{split}
\end{equation}
along with similar relations for $D_2$ and $D_3$.  We notice that the sum of the three terms that begin $\dfrac{\partial}{\partial \boldsymbol{\varphi}_{(l,m,n)}}$ is zero since it is the coefficient of $\delta \boldsymbol{\varphi}_{(l,m,n)}$ in $\delta \sf{dL}$.  The sum of the three terms that begin $D_1D_2D_3$ is zero by \eqref{trip}.  We can simplify further by noticing that

\begin{equation}
D_1D_2\frac{\delta \mathscr{L}_{(23)}}{\delta \boldsymbol{\varphi}_{(l,m+1,n)}}+D_1D_2\frac{\delta \mathscr{L}_{(31)}}{\delta \boldsymbol{\varphi}_{(l+1,m,n)}}=-D_1D_2\frac{\delta \mathscr{L}_{(12)}}{\delta \boldsymbol{\varphi}_{(l+1,m+1,n-1)}}
\end{equation}
\begin{equation}
D_2D_3\frac{\delta \mathscr{L}_{(31)}}{\delta \boldsymbol{\varphi}_{(l,m,n+1)}}+D_2D_3\frac{\delta \mathscr{L}_{(12)}}{\delta \boldsymbol{\varphi}_{(l,m+1,n)}}=-D_2D_3\frac{\delta \mathscr{L}_{(23)}}{\delta \boldsymbol{\varphi}_{(l-1,m+1,n+1)}}
\end{equation}
\begin{equation}
D_3D_1\frac{\delta \mathscr{L}_{(12)}}{\delta \boldsymbol{\varphi}_{(l+1,m,n)}}+D_3D_1\frac{\delta \mathscr{L}_{(23)}}{\delta \boldsymbol{\varphi}_{(l,m,n+1)}}=-D_3D_1\frac{\delta \mathscr{L}_{(12)}}{\delta \boldsymbol{\varphi}_{(l-1,m+1,n+1)}}
\end{equation}
by \eqref{trip}.  Therefore, the surviving terms form \eqref{expansion} are:

\begin{equation}
\begin{split}
&\frac{\partial \mathscr{L}_{(12)}}{\partial \boldsymbol{\varphi}_{(l,m,n-1)}}-D_1\frac{\delta \mathscr{L}_{(12)}}{\delta \boldsymbol{\varphi}_{(l+1,m,n-1)}}-D_2\frac{\delta \mathscr{L}_{(12)}}{\delta \boldsymbol{\varphi}_{(l,m+1,n-1)}}-D_1D_2\frac{\delta \mathscr{L}_{(12)}}{\delta \boldsymbol{\varphi}_{(l+1,m+1,n-1)}}\\
&\frac{\partial \mathscr{L}_{(23)}}{\partial \boldsymbol{\varphi}_{(l-1,m,n)}}-D_2\frac{\delta \mathscr{L}_{(23)}}{\delta \boldsymbol{\varphi}_{(l-1,m+1,n)}}-D_3\frac{\delta \mathscr{L}_{(23)}}{\delta \boldsymbol{\varphi}_{(l-1,m,n+1)}}-D_2D_3\frac{\delta \mathscr{L}_{(23)}}{\delta \boldsymbol{\varphi}_{(l-1,m+1,n+1)}}\\
&\frac{\partial \mathscr{L}_{(31)}}{\partial \boldsymbol{\varphi}_{(l,m-1,n)}}-D_3\frac{\delta \mathscr{L}_{(12)}}{\delta \boldsymbol{\varphi}_{(l,m-1,n+1)}}-D_1\frac{\delta \mathscr{L}_{(12)}}{\delta \boldsymbol{\varphi}_{(l+1,m-1,n)}}-D_3D_1\frac{\delta \mathscr{L}_{(31)}}{\delta \boldsymbol{\varphi}_{(l+1,m-1,n+1)}}
\end{split}
\end{equation}
which, by \eqref{split}, is precisely

\begin{equation}
\frac{\delta\mathscr{L}_{(12)}}{\delta \boldsymbol{\varphi}_{(l,m,n-1)}}+\frac{\delta\mathscr{L}_{(23)}}{\delta \boldsymbol{\varphi}_{(l-1,m,n)}}+\frac{\delta\mathscr{L}_{(31)}}{\delta \boldsymbol{\varphi}_{(l,m-1,n)}}.
\end{equation}
This is the coefficient of $\delta \boldsymbol{\varphi}_{(l,m,n)}$ and by setting this equal to zero, we have a new relation at the $M-1^{th}$ order.  We should note that if more than one of $l,m$ and $n$ is non-zero then the coefficient of $\delta \boldsymbol{\varphi}_{(l,m,n)}$ will contribute to more than one of the coefficients at the next order down.  However, since each coefficient is identically zero, the integrand is unchanged if we multiply each coefficient by 1,2 or 3 as required, corresponding to how many of $l,m$ and $n$ are non-zero.\\

\noindent Whilst this proof applies only to a Lagrangian 2-form, it is relatively straightforward to generalize this argument to get the multiform EL equations for a Lagrangian k-form.

\section{Proof of Theorem \ref{thm}}\label{ap:A}

\noindent The Lagrangian 2-form $\mathscr{L}_{(\xi\eta)}\rm d\xi\wedge d\eta+\mathscr{L}_{(\eta\nu)}d\eta\wedge d\nu+\mathscr{L}_{(\nu\xi)}d\nu\wedge d\xi$ is closed if and only if $D_\nu\mathscr{L}_{(\xi\eta)}+D_\xi\mathscr{L}_{(\eta\nu)}+D_\eta\mathscr{L}_{(\nu\xi)}=0$ on solutions of the system.

\begin{equation*}
D_\nu\mathscr{L}_{(\xi\eta)}+D_\xi\mathscr{L}_{(\eta\nu)}+D_\eta\mathscr{L}_{(\nu\xi)}
\end{equation*}
\begin{equation}\label{s1}
\begin{split}
=&\tr\bigg\{\sum_{i=1}^{N_1}[(\varphi^i)^{-1}\varphi_\eta^i,(\varphi^i)^{-1}\varphi_\nu^i]\bar U^i +\sum_{j=1}^{N_2}[(\psi^j)^{-1}\psi_\nu^j,(\psi^j)^{-1}\psi_\xi^j]\bar V^j+\sum_{k=1}^{N_3}[(\chi^k)^{-1}\chi_\xi^k,(\chi^k)^{-1}\chi_\eta^k]\bar W^k
\end{split}
\end{equation}
\begin{equation}\label{s2}
\begin{split}
&+\sum_{i=1}^{N_1}W^0_\eta U^i +W^0U^i_\eta- V^0_\nu U^i-V^0U^i_\nu\\&+\sum_{j=1}^{N_2}U^0_\nu V^j+U^0V^j_\nu-W^0_\xi V^j-W^0V^j_\xi\\&+\sum_{k=1}^{N_3}V^0_\xi W^k+V^0W^k_\xi-U^0_\eta W^k-U^0W^k_\eta
\end{split}
\end{equation}

\begin{equation}\label{s3}
-\sum_{i=1}^{N_1}\sum_{j=1}^{N_2}\frac{V^j_\nu U^i+V^jU^i_\nu}{a_i-b_j}-\sum_{j=1}^{N_2}\sum_{k=1}^{N_3}\frac{W^k_\xi V^j+W^kV^j_\xi}{b_j-c_k}-\sum_{k=1}^{N_3}\sum_{i=1}^{N_1}\frac{U^i_\eta W^k+U^iW^k_\eta}{c_k-a_i}\bigg\}.
\end{equation}

\noindent The first set of terms (part (\ref{s1})) are equivalent to

\begin{equation*}
\tr\bigg\{\sum_{i=1}^{N_1}\varphi_\eta^i(\varphi^i)^{-1} U^i_\nu+\sum_{j=1}^{N_2}\psi_\nu^j(\psi^j)^{-1} V^j_\xi+\sum_{k=1}^{N_3}\chi_\xi^k(\chi^k)^{-1} W^k_\eta\bigg\}.
\end{equation*}

\noindent We can use the compatibility conditions to re-write this as

\begin{equation*}
\tr\bigg\{\sum_{i=1}^{N_1}\sum_{k=1}^{N_3}\frac{U^i_\eta W^k}{c_k-a_i}-\sum_{i=1}^{N_1}U^i_\eta W^0
+\sum_{j=1}^{N_2}\sum_{i=1}^{N_1}\frac{V^j_\nu U^i}{a_i-b_j}-\sum_{j=1}^{N_2}V^j_\nu U^0
+\sum_{k=1}^{N_3}\sum_{j=1}^{N_2}\frac{W^k_\xi V^j}{b_j-c_k}-\sum_{k=1}^{N_3}W^k_\xi V^0\bigg\}
\end{equation*}

\noindent and we see that all of these terms will cancel with terms in parts (\ref{s2}) and (\ref{s3}).  Also, we can simplify part (\ref{s2}) by using the compatibility conditions on $U^0$, $V^0$ and $W^0$ to get that

\begin{equation*}
\begin{split}
D_\nu\mathscr{L}_{(\xi\eta)}&+D_\xi\mathscr{L}_{(\eta\nu)}+D _\eta\mathscr{L}_{(\nu\xi)}\\
=\tr\bigg\{&\sum_{i=1}^{N_1}\sum_{j=1}^{N_2}\frac{V^jU^i_\nu}{b_j-a_i}+\sum_{j=1}^{N_2}\sum_{k=1}^{N_3}\frac{W^kV^j_\xi}{c_k-b_j}+\sum_{k=1}^{N_3}\sum_{i=1}^{N_1}\frac{U^iW^k_\eta}{a_i-c_k}\\
&+\sum_{i=1}^{N_1}(-V^0U^i_\nu+[V^0,W^0]U^i)+\sum_{j=1}^{N_2}(-W^0V^j_\xi+[W^0,U^0]V^j)+\sum_{k=1}^{N_3}(-U^0W^k_\eta+[U^0,V^0]W^k)\bigg\}.
\end{split}
\end{equation*}

\noindent We use the compatibility conditions again on the terms in the first line to re-write this as

\begin{equation*}
\begin{split}
\tr\bigg\{&\sum_{i=1}^{N_1}\sum_{j=1}^{N_2}\sum_{k=1}^{N_3}V^j[U^i,W^k]\bigg(\frac{1}{(b_j-a_i)(c_k-a_i)}+\frac{1}{(c_k-b_j)(a_i-b_j)}+\frac{1}{(a_i-c_k)(b_j-c_k)}\bigg)\\
&+\sum_{i=1}^{N_1}\sum_{j=1}^{N_2}\frac{[W^0,U^i]V^j}{b_j-a_i}+\sum_{j=1}^{N_2}\sum_{k=1}^{N_3}\frac{[U^0,V^j]W^k}{c_k-b_j}+\sum_{k=1}^{N_3}\sum_{i=1}^{N_1}\frac{[V^0,W^k]U^i}{a_i-c_k}\\
&+\sum_{i=1}^{N_1}(-V^0U^i_\nu+V^0,[W^0,U^i])+\sum_{j=1}^{N_2}(-W^0V^j_\xi+W^0[U^0,V^j])+\sum_{k=1}^{N_3}(-U^0W^k_\eta+U^0[V^0,W^k])\bigg\}.
\end{split}
\end{equation*}

\noindent The first line of this is easily seen to be zero.  The remainder of this expression is then

\begin{equation*}
\begin{split}
\tr\bigg\{&-V^0\sum_{i=1}^{N_1}\bigg(U^i_\nu+[U^i,W^0+\sum_{k=1}^{N_3}\frac{W^k}{a_i-c_k}]\bigg)\\
&-W^0\sum_{j=1}^{N_2}\bigg(V^j_\xi+[V^j,U^0+\sum_{i=1}^{N_1}\frac{U^i}{b_j-a_i}]\bigg)\\
&-U^0\sum_{k=1}^{N_3}\bigg(W^k_\eta+[W^k,V^0+\sum_{j=1}^{N_2}\frac{V^j}{c_k-b_k}\bigg)\bigg\}
\end{split}
\end{equation*}

\noindent which is also zero since the summed terms are zero by the compatibility conditions.

\bibliography{ZMref}

\begin{thebibliography}{10}

\bibitem{Lobb2009}
S.~Lobb and F.W. Nijhoff.
\newblock Lagrangian multiforms and multidimensional consistency.
\newblock {\em Journal of Physics A: Mathematical and Theoretical},
  42(45):454013, 2009.

\bibitem{Zakharov1980}
V.E. Zakharov and A.V. Mikhailov.
\newblock Variational principle for equations integrable by the inverse problem
  method.
\newblock {\em Functional Analysis and Its Applications}, 14(1):43--44, Jan
  1980.

\bibitem{Xenitidis3295}
P.~Xenitidis, F.W. Nijhoff, and S.~Lobb.
\newblock On the {L}agrangian formulation of multidimensionally consistent
  systems.
\newblock {\em Proceedings of the Royal Society of London A: Mathematical,
  Physical and Engineering Sciences}, 467(2135):3295--3317, 2011.

\bibitem{Dickey2003}
L.A. Dickey.
\newblock {\em Soliton Equations and Hamiltonian Systems}.
\newblock World Scientific, 2nd edition, 2003.

\bibitem{Suris2016}
Y.B. Suris and M.~Vermeeren.
\newblock {\em On the Lagrangian Structure of Integrable Hierarchies}, pages
  347--378.
\newblock Springer Berlin Heidelberg, Berlin, Heidelberg, 2016.

\bibitem{Suris2016a}
Y.B. Suris.
\newblock {\em Variational Symmetries and Pluri-Lagrangian Systems},
  chapter~13, pages 255--266.
\newblock World Scientific, 2016.

\bibitem{Dickey1990}
L.A. Dickey.
\newblock General {Z}akharov-{S}habat equations, multi-time {H}amiltonian
  formalism, and constants of motion.
\newblock {\em Comm. Math. Phys.}, 132(3):485--497, 1990.

\bibitem{AKNS1974}
M.J. Ablowitz, D.J. Kaup, A.C. Newell, and H.~Segur.
\newblock The inverse scattering transform-{F}ourier analysis for nonlinear
  problems.
\newblock {\em Studies in Applied Mathematics}, 53(4):249--315.

\bibitem{Nijhoff1987}
F.W. Nijhoff.
\newblock {\em Integrable hierarchies {L}agrangian structures and non-commuting
  flows}, pages 150--181.
\newblock World Scientific, Singapore, 1987.

\bibitem{Nijhoff1988}
F.W. Nijhoff.
\newblock Linear integral transformations and hierarchies of integrable
  nonlinear evolution equations.
\newblock {\em Physica D: Nonlinear Phenomena}, 31(3):339--388, 1988.

\bibitem{FLASCHKA1983}
H.~Flaschka, A.C. Newell, and T.~Ratiu.
\newblock Kac-{M}oody {L}ie algebras and soliton equations: {II}. {Lax
  equations associated with A1(1)}.
\newblock {\em Physica D: Nonlinear Phenomena}, 9(3):300--323, 1983.

\bibitem{AVAN2016415}
J.~Avan, V.~Caudrelier, A.~Doikou, and A.~Kundu.
\newblock Lagrangian and {H}amiltonian structures in an integrable hierarchy
  and space–time duality.
\newblock {\em Nuclear Physics B}, 902:415--439, 2016.

\bibitem{NIJHOFF2000147}
F.W. Nijhoff, A.~Hone, and N.~Joshi.
\newblock On a {S}chwarzian {PDE} associated with the {KdV} hierarchy.
\newblock {\em Physics Letters A}, 267(2):147--156, 2000.

\bibitem{Tongas2005}
A.~Tongas and F.W. Nijhoff.
\newblock Generalized hyperbolic {E}rnst equations for an
  {E}instein-{Maxwell-W}eyl field.
\newblock {\em Journal of Physics A: Mathematical and General}, 38(4):895--906,
  2005.

\bibitem{Krichever1997}
I.M. Krichever and D.H. Phong.
\newblock On the integrable geometry of soliton equations and $n=2$
  supersymmetric gauge theories.
\newblock {\em J. Differential Geom.}, 45(2):349--389, 1997.

\bibitem{Krichever1998}
I.M. Krichever and D.H. Phong.
\newblock Symplectic forms in the theory of solitons.
\newblock In {\em Surveys in Differential Geometry IV}, pages 239--313.
  International Press, 1998.

\end{thebibliography}

\bibliographystyle{unsrt}

\end{document}